\documentclass[11pt]{article}

\usepackage[margin=1.1in]{geometry}
\usepackage{amsmath,amssymb,amsfonts,amsthm,mathtools}
\usepackage{mathrsfs}
\usepackage{graphicx}
\usepackage{flafter}
\usepackage{bm}
\usepackage{booktabs,tabularx,array,enumitem}
\usepackage{algorithm}
\usepackage{algpseudocode}
\algrenewcommand{\algorithmicrequire}{\textbf{Input:}}
\algrenewcommand{\algorithmicensure}{\textbf{Output:}}
\usepackage{multirow}
\usepackage[authoryear,round]{natbib}
\usepackage{xcolor}
\usepackage{microtype}
\usepackage{hyperref}
\usepackage{cleveref}
\crefname{figure}{Fig.}{Figs.}
\Crefname{figure}{Fig.}{Figs.}
\usepackage{titlesec}
\usepackage{fancyhdr}
\usepackage{parskip,placeins}
\usepackage{subcaption}
\crefname{subfigure}{Fig.}{Figs.}
\Crefname{subfigure}{Fig.}{Figs.}
\usepackage{tikz,pgfplots}
\usetikzlibrary{arrows.meta,positioning,bayesnet}
\newcommand{\cmsedge}[2]{%
  \foreach \cmsparent in {#1}{%
    \foreach \cmschild in {#2}{%
      \draw[->] (\cmsparent) -- (\cmschild);
    }%
  }%
}
\pgfplotsset{compat=1.18}

\tikzset{
  cmsnode/.style={minimum size=10mm},
  cmschange/.style={draw=orange!80!black,line width=1.2pt},
  cmsnote/.style={font=\small,align=center},
  cmsenv/.style={draw=orange!80!black,rectangle,fill=orange!8,
                minimum size=7mm,inner sep=2pt,line width=.8pt},
  cmsshift/.style={->,dashed,draw=orange!80!black,line width=1pt},
  cmsinput/.style={draw=blue!55!black,dashed,rounded corners}
}

\newtheorem{proposition}{Proposition}
\newtheorem{corollary}{Corollary}
\newtheorem{lemma}{Lemma}
\newtheorem{definition}{Definition}

\hypersetup{
  colorlinks=true,
  linkcolor=blue!70!black,
  citecolor=blue!50!black,
  urlcolor=blue!60!black,
  pdftitle={Conformal Prediction with Stable Measurement Mechanisms},
  pdfauthor={Seungjin Choi}
}

\providecommand{\E}{\mathbb{E}}
\providecommand{\Prob}{\mathbb{P}}

\newcommand{\N}{\mathcal N}
\newcommand{\ind}{\mathbf 1}
\newcommand{\pa}{\operatorname{pa}}
\newcommand{\KL}{\operatorname{KL}}
\newcommand{\TV}{d_{\mathrm{TV}}}
\newcommand{\Var}{\operatorname{Var}}

\newcommand{\calD}{\mathcal D}
\newcommand{\calJ}{\mathcal J}
\newcommand{\calC}{\mathcal C}
\newcommand{\calA}{\mathcal A}

\newcommand{\calY}{\mathcal Y}
\newcommand{\dd}{\,\mathrm d}

\DeclareMathOperator*{\argmax}{arg\,max}

\titleformat{\section}{\large\bfseries\color{blue!60!black}}{\thesection}{1em}{}[\titlerule]
\titleformat{\subsection}{\normalsize\bfseries\color{blue!40!black}}{\thesubsection}{1em}{}
\titleformat{\subsubsection}{\normalsize\itshape\bfseries}{\thesubsubsection}{1em}{}

\begin{document}

\begin{center}
  {\LARGE\bfseries \textsf{Conformal Prediction\\[0.5em]
    with Stable Measurement Mechanisms}} \\[2em]
  {\large Seungjin Choi}\\[1em]
  {\normalsize CROID Research and aSSIST University, Seoul, Korea}\\[1em]
\end{center}

\vspace{0.5em}
\noindent\rule{\linewidth}{1.5pt}
\vspace{0.5em}

\begin{abstract}
In applications such as air-quality monitoring, inexpensive
sensor readings can help predict pollutant concentrations
measured by a separate reference instrument.
We study how to adapt conformal prediction intervals to a
new population when reference outcomes are available in the
source data, but only sensor readings and context, such as
weather conditions, are available in the target.
We assume that the measurement distribution at a given
outcome and context remains stable, while context frequencies
and outcome distributions may change.
This model is a particular form of causal mechanism shift.
Its statistical restriction has been studied as sparse joint shift.
Using established marginal matching, predictive adjustment,
and weighted conformal calibration, we examine what this
restriction contributes to coverage and interval length
for continuous outcomes.
Gaussian analyses show how measurement stability can identify
outcome changes and how retaining context can shorten prediction intervals.
For Gaussian models with known source mechanisms, we derive
exact coverage formulas in the large-calibration limit while
keeping the unlabeled target sample finite.
These formulas explain when sharing a correction across
contexts is preferable to fitting separate corrections,
and why estimating outcome variability is particularly
sensitive to weakly informative measurements.
Experiments on synthetic Gaussian data show a coverage gain over
covariate weighting at the same mean length and a length reduction
of about $8\%$ relative to pooled label adjustment at comparable
coverage. On the Air
Quality dataset, controlled shifts yield coverage gains with
longer intervals, while spring-to-autumn and year-over-year transfers
show substantial undercoverage.
\end{abstract}

\newpage
\section{Introduction}
\label{sec:intro}

Distribution shift often arises because some parts of a
data-generating process change while others remain stable.
Causal mechanism-shift models describe this structure by
specifying which conditional distributions, given their causal
parents, may change across populations
\citep{PetersJ2017book,PearlJ2014ss,SubbaswamyA2022jci}.
For prediction, this structure matters because intervals calibrated
in one population may lose coverage in another. When outcomes are
unavailable in the target population, changes in the observed
inputs alone generally do not determine how predictions should
be corrected. Stable parts of the process can provide the
additional information needed for adaptation.

We focus on a particular mechanism-shift model with a stable
measurement process. A measurement mechanism specifies the
distribution of readings given the quantity being measured and the
measurement conditions. This distribution includes measurement noise.
Stability means that this full conditional distribution is the same in the
source and target populations. The marginal distribution of readings
may still change as the population changes.

Air-quality monitoring provides a concrete example. The outcome
$Y$ is the pollutant concentration reported by a reference analyzer.
The measurements $Z$ come from separate, inexpensive sensors and
may record electrical responses rather than concentration itself.
Weather conditions, such as temperature and humidity, form the
context $C$. Source data contain paired observations of $(C,Z,Y)$,
whereas adaptation uses target observations of $(C,Z)$ without
reference outcomes. The goal is to predict $Y$ from $X=(C,Z)$ in
the target population. Pollution levels and weather patterns can
change while the sensors retain their response distribution at a
fixed concentration and weather condition. Similarly, a new
industrial operating regime can change the pressures encountered
without changing a sensor's response at a given pressure and
ambient temperature.

We therefore allow context frequencies and outcome distributions
within each context to change, while assuming
\[
 p(z\mid y,c)=q(z\mid y,c),
\]
where $p$ and $q$ denote the source and target densities.
An informative measurement can then help infer outcome changes
from the unlabeled target inputs. Crucially, a stable sensor
response does not imply a stable prediction of the outcome from
the sensor. When the outcome distribution changes, the same
reading can support a different predictive distribution.

Our question is whether using this stability assumption improves
conformal prediction intervals relative to covariate-shift and
label-shift adjustment, and how reliably the required correction
can be estimated. Covariate shift keeps the outcome distribution
given the inputs fixed. Label shift keeps the distribution of all
inputs given the outcome fixed. Our model keeps only the
measurement distribution given outcome and context fixed, allowing
outcome changes to differ across contexts. This flexibility can
help adaptation, but its value depends on the information carried
by the measurements and the accuracy of the estimated correction.

The formulation also applies when $Z$ is a more general observed
proxy that provides information about $Y$ given $C$.
It need not come from a physical instrument or share the outcome's
units. In every application, conditional stability requires
justification. Sensor aging, recalibration, or unmeasured influences
can change the measurement process.

Statistically, our restriction is sparse joint shift (SJS)
\citep{ChenL2022neurips,TascheD2024fods}.
SJS allows the outcome and selected features to change jointly,
while preserving the distribution of the remaining features given
them. Here, context forms the selected feature block and the proxy
forms the remaining block. These papers establish identification,
ratio estimation, and predictive correction for categorical
outcomes. We study prediction intervals for continuous outcomes,
with the context and proxy roles specified before fitting.
The causal interpretation motivates these roles. The analysis
depends on the resulting statistical restriction.
Related work uses sparse mechanism changes to learn causal
structure \citep{PerryR2022neurips}. Here, the graph and the
mechanisms allowed to change are supplied as modeling assumptions.

The same density-ratio and marginal-matching principles apply to both classification and regression. 
For continuous outcomes, however, recovering the outcome distribution
from measurements can be unstable, even at a fixed context.
Distinct outcome distributions
may produce very similar measurement distributions, making them
difficult to distinguish from a finite target sample. We consider
specified families of outcome changes and examine the consequences
of this estimation problem for coverage and interval length.
The underlying identification requirement is established in
measurement-error and proxy methods
\citep{HuY2008econ,MiaoW2018biometrika}.

Estimation and calibration also build on existing methods.
Marginal matching fits a candidate outcome correction by comparing
its implied input distribution with the observed target inputs.
It appears in label-shift estimation
\citep{SaerensM2002neco,duPlessisM2012icml,LiptonZ2018icml,GargS2020neurips},
target-shift adaptation for both classification and regression
\citep{ZhangK2013icml}, SJS, and structured exponential tilting
\citep{MaityS2023iclr}.
Concepts and proxies also support adaptation under latent subgroup
shift \citep{AlabdulmohsinI2023aistats,TsaiK2024aistats}.
Weighted conformal calibration provides a coverage guarantee when
the density ratio is known
\citep{TibshiraniR2019neurips,PodkopaevA2021uai}. Estimating the
ratio introduces an additional approximation error
\citep{LeiL2021jrsssb}.
Our earlier work studies conformal adaptation under label shift
\citep{Choi2026eiml}, predictive tilting in
full conformal Bayes \citep{KimJY2026copa}, censored Gaussian
regression \citep{Choi2026arxiv_scbc}, sensitivity analysis \citep{Choi2026arxiv_jtscb}, exponential-tilt joint
shift \citep{Choi2026arxiv_extracp}, and conformal prediction for
molecular properties \citep{LeeHS2025ReliableML,LeeHS2026eiml}.

Building on this framework, we analyze what the stable measurement
restriction contributes to prediction intervals and when its
benefits survive estimation. The supporting identification and
calibration results and the oracle interval-length benchmark
are applications of established principles, credited alongside
their statements. Our contribution
is the explicit analysis of coverage after fitting Gaussian
corrections, its implications for using context, and the empirical
comparisons. The study has three parts.

\begin{enumerate}[leftmargin=*,itemsep=0.4em,topsep=0.4em]
\item \textbf{Prediction benefit of context and cost of separate corrections.}
A Gaussian example illustrates how the stability restriction
identifies an outcome correction that target inputs alone leave
ambiguous. An interval-length comparison with the target
distribution known shows when retaining context gives shorter
intervals, using classical minimum-volume reasoning
\citep{ScottC2006jmlr}.
We then distinguish using context for prediction from estimating
a separate shift in each context. Our coverage calculation shows
when the benefit of separate corrections outweighs their extra
estimation error
(\Cref{prop:contextvalue,prop:sharedcorrection}).

\item \textbf{Coverage loss from estimating outcome changes.}
For Gaussian models with known source mechanisms, we derive exact
coverage formulas in the large-calibration limit while keeping the
unlabeled target sample finite
(\Cref{prop:scaleexact,prop:scalecost,cor:contextcost}).
They quantify why estimating outcome variability is especially
sensitive to weak proxy information and how the fitting cost
changes across contexts. The scalar calculation also applies to
label shift, the no-context special case.

\item \textbf{Prediction comparisons under controlled and observed shifts.}
Experiments with estimated source mechanisms compare coverage and
length with covariate weighting and pooled label-shift adjustment.
The pooled Gaussian comparator uses $(Z,Y)$ and is correctly
specified, so its comparison isolates the value of retaining
context. One five-sensor Air Quality pipeline evaluates both
controlled shifts, constructed to preserve the measurement mechanism,
and observed calendar-based shifts. The controlled selection statistic
differs from the fitted tilt statistic.
\end{enumerate}

\section{Mechanism shift models and identification}
\label{sec:setup}
\label{sec:id}

This section describes the prediction data and the stable
measurement assumption. We first explain how marginal matching
fits an outcome correction through the sensor distribution
it implies. We then ask when this matching uniquely determines
the outcome change and describe estimation and predictive
adjustment. A Gaussian example illustrates what the stability
assumption rules out.

\subsection{Prediction data and the stable measurement model}

For each observation, let $Y$ denote the outcome of interest,
$Z$ an observed proxy for $Y$, and $C$ the observed context.
In the measurement applications that motivate this paper,
$Y$ is a reference outcome and $Z$ contains readings from
separate sensors. Prediction uses $X=(C,Z)$ to construct an
interval for $Y$. The source data contain both the inputs
and the outcome, whereas the target data used for adaptation
contain only the inputs.

Let $P$ and $Q$ denote the source and target distributions of
$(X,Y)$. We assume $Q\ll P$, so that the target-to-source
density ratio exists. We use mutually independent source
training, source calibration, and unlabeled target samples,
\begin{equation}
 \calD_{\rm tr}\sim P^N,\qquad
 \calD_{\rm cal}=\{(X_i,Y_i)\}_{i=1}^n\sim P^n,\qquad
 \calD_{\rm tar}=\{X_j^{\rm t}\}_{j=1}^m\sim Q_X^m.
 \label{eq:data}
\end{equation}
The superscript $\mathrm t$ marks target observations.
A new test pair $(X_*,Y_*)\sim Q$ is independent of all three samples.

Let $\calA=(\calD_{\rm tr},\calD_{\rm tar})$ denote the source
training and unlabeled target data.
Using only these samples, we construct predictive scores,
estimate density ratios, and choose among prespecified model
families. The resulting scores, ratios, and model choices
are fixed before calibration.
We seek marginal target coverage, with probability taken
over both the new target pair and the samples used to
construct its prediction set.
We describe the allowed changes between source and target
populations through the following mechanism-shift model.

\bigskip
\begin{definition}[Mechanism shift \citep{SubbaswamyA2022jci}]
\label{def:mechanismshift}
Let $\mathcal G$ be a causal directed acyclic graph on
$V=(V_1,\ldots,V_d)$, with parent sets $\pa_j$.
Suppose the source distribution factorizes as
\[
 p_V(v)=\prod_{j=1}^d p_j(v_j\mid v_{\pa_j}).
\]
For a specified set $\calJ\subseteq\{1,\ldots,d\}$, a target
distribution satisfies the mechanism-shift model restricted
to $\calJ$ if its density has the form
\[
 q_V(v)
 =\prod_{j\in\calJ}
   \underbrace{q_j(v_j\mid v_{\pa_j})}_{\text{may change}}
  \prod_{j\notin\calJ}
   \underbrace{p_j(v_j\mid v_{\pa_j})}_{\text{shared}}.
\]
Thus, mechanisms indexed by $\calJ$ may change, while all
other mechanisms remain the same in both populations.
For a root node, the corresponding mechanism is its
marginal distribution.
\end{definition}

We now apply this definition to the stable measurement model
in \Cref{fig:shiftoutcome}. In this graph, context $C$
influences both the outcome $Y$ and the proxy $Z$, and $Y$
influences $Z$. We specify the graph and the allowed changes
before fitting. The context distribution and the outcome
distribution within each context may change, while the
conditional distribution of the proxy given outcome and
context remains fixed. The source and target distributions
therefore have the form
\begin{align}
 p(c,y,z)
 &=p_C(c)p(y\mid c)k(z\mid y,c),\notag\\
 q(c,y,z)
 &=q_C(c)q(y\mid c)k(z\mid y,c).
 \label{eq:proxyfactor}
\end{align}
Here $k(z\mid y,c)$ is the \emph{conditional measurement density},
describing the distribution of possible readings at fixed outcome
and context, including measurement noise. This density represents
the measurement mechanism and is shared across populations.
The formulation also covers general proxies without physical sensors.

The causal interpretation treats
the reference outcome as the physical quantity generating the
sensor response. If that reference outcome is measured with
error, stability of the recorded $Z\mid Y,C$ remains an
assumption and does not follow from stable instrument
operation alone.

The conditional stability assumption in
\eqref{eq:proxyfactor} is the SJS restriction of
\citet[Definition~1]{ChenL2022neurips}, with $X_I=C$ and
$X_{I^c}=Z$. The joint distribution of $(C,Y)$ may change,
while the conditional distribution of $Z$ given $(C,Y)$ is
preserved. The shared measurement density cancels from the joint density
ratio, giving
\begin{equation}
 w((c,z),y)
 =\frac{q(c,y,z)}{p(c,y,z)}
 =\frac{q_C(c)}{p_C(c)}
  \frac{q(y\mid c)}{p(y\mid c)}.
 \label{eq:proxyw}
\end{equation}
The first factor accounts for changes in context frequencies.
The second accounts for outcome changes within context.
The dependence on $(C,Y)$ alone is the density-ratio
characterization of SJS given for finite outcome spaces by
\citet[Theorem~4.1]{TascheD2024fods}.
The same cancellation
applies here to continuous outcome densities.
Although $Z$ does not appear in the ratio, the unlabeled
target observations of $(C,Z)$ are used to estimate the
outcome correction.

\Cref{fig:shiftoverview} compares this restriction with
covariate shift and label shift. Panel~(a) keeps $Y\mid X$
fixed while allowing the input distribution to change.
Panel~(b) keeps $X\mid Y$ fixed while allowing the outcome
distribution to change. Panel~(c) represents our model,
in which both the input distribution and the predictive
distribution $Y\mid X$ may change.

Familiar special cases follow from \eqref{eq:proxyfactor}.
If only the context distribution changes, then $Y\mid C,Z$
remains fixed, giving covariate shift on $X=(C,Z)$.
With no context variable, the model reduces to label shift
with $X=Z$, the $0$-SJS case of
\citet[Theorem~2]{ChenL2022neurips}.
Label shift on the full input $(C,Z)$ also preserves
$Z\mid Y,C$. This inclusion follows from the nesting property
of SJS \citep[Corollary~4.3]{TascheD2024fods}.
For a fixed context--proxy split, however, the model does
not include every covariate shift, since covariate shift
alone need not preserve $Z\mid Y,C$.
We next use the shared measurement mechanism to match the observed target
input distribution, then ask whether this matching determines
the correction in \eqref{eq:proxyw}.

\begin{figure}[!t]
\centering
\begin{subfigure}[t]{.48\textwidth}
\centering
\begin{tikzpicture}[>=Latex]
 \path[use as bounding box]
   (-.25,-1.05) rectangle (5.65,2.2);
 \node[cmsenv] (e) at (1.1,1.65) {$E$};
 \node[obs,cmsnode,cmschange] (x) at (1.1,0) {$X$};
 \node[latent,cmsnode] (y) at (4.3,0) {$Y$};
 \cmsedge {x} {y}
 \draw[cmsshift] (e) -- (x);
 \draw[cmsinput] (.48,-.58) rectangle (1.72,.58);
\end{tikzpicture}
\par\smallskip
{\small May change $p_X$ to $q_X$.\\[2pt]
Stable $p(y\mid x)$.\par}
\caption{Covariate shift.}
\label{fig:shiftcovariate}
\end{subfigure}\hfill
\begin{subfigure}[t]{.48\textwidth}
\centering
\begin{tikzpicture}[>=Latex]
 \path[use as bounding box]
   (-.25,-1.05) rectangle (5.65,2.2);
 \node[cmsenv] (e) at (4.3,1.65) {$E$};
 \node[obs,cmsnode] (x) at (1.1,0) {$X$};
 \node[latent,cmsnode,cmschange] (y) at (4.3,0) {$Y$};
 \cmsedge {y} {x}
 \draw[cmsshift] (e) -- (y);
 \draw[cmsinput] (.48,-.58) rectangle (1.72,.58);
\end{tikzpicture}
\par\smallskip
{\small May change $p_Y$ to $q_Y$.\\[2pt]
Stable $p(x\mid y)$.\par}
\caption{Label shift.}
\label{fig:shiftlabel}
\end{subfigure}

\medskip
\begin{subfigure}[t]{.48\textwidth}
\centering
\begin{tikzpicture}[>=Latex]
 \path[use as bounding box]
   (-.25,-1.05) rectangle (5.65,2.2);
 \node[cmsenv] (e) at (.5,1.65) {$E$};
 \node[obs,cmsnode,cmschange] (c) at (.5,0) {$C$};
 \node[latent,cmsnode,cmschange] (y) at (2.7,1.65) {$Y$};
 \node[obs,cmsnode] (z) at (4.9,0) {$Z$};
 \cmsedge {c} {y,z}
 \cmsedge {y} {z}
 \draw[cmsshift] (e) -- (y);
 \draw[cmsshift] (e) -- (c);
 \draw[cmsinput] (-.12,-.58) rectangle (5.52,.58);
 \node[cmsnote] at (2.7,-.9) {$X=(C,Z)$};
\end{tikzpicture}
\par\smallskip
{\small May change $p_C$ and $p(y\mid c)$.\\[2pt]
Stable $k(z\mid y,c)$.\par}
\caption{Stable measurement mechanism with context.}
\label{fig:shiftoutcome}
\end{subfigure}

\caption{Three mechanism-shift models viewed during target
adaptation. The square $E$ denotes the environment, with
$E=0$ for source and $E=1$ for target. It is not a prediction
feature. Solid arrows specify the assumed causal structure.
Dashed arrows from $E$ and orange outlines mark mechanisms
allowed to change. All other mechanisms remain fixed.
Gray circles indicate variables observed during target
adaptation, while white circles indicate variables then
unavailable. Dashed blue boxes identify the prediction
input $X$. Panels (a) and (b) use different causal orientations.
Panel (c) represents the model in \eqref{eq:proxyfactor}.}
\label{fig:shiftoverview}
\end{figure}

\subsection{Marginal matching and identification}
\label{subsec:identification}

Let $q(y\mid c)$ denote the unknown target outcome distribution.
We consider a prespecified family of candidates
$\{q_\beta(y\mid c):\beta\in\mathcal B\}$, indexed by a parameter
$\beta$. For now, treat the source distribution and the shared
measurement density $k$ as known. \Cref{subsec:fitratio}
specifies a conditional exponential-tilt family and explains
how to fit it when these source quantities are estimated.

\paragraph{From candidate outcomes to sensor readings.}
The source data contain $(C,Y,Z)$, allowing us to learn how
sensor readings depend on outcome and context. Under the
assumed stability of this relationship, the target sensor
distribution satisfies
\begin{equation}
 q(z\mid c)
 =\int k(z\mid y,c)q(y\mid c)\dd y.
 \label{eq:channelmatching}
\end{equation}
This integral averages the sensor response over the outcomes
that occur in the target population. Replacing the unknown
outcome distribution by a candidate gives the sensor and
input distributions
\begin{equation}
 \begin{split}
 q_\beta(z\mid c)
 &=\int k(z\mid y,c)q_\beta(y\mid c)\dd y,\\
 q_{\beta,X}(c,z)
 &=q_C(c)q_\beta(z\mid c).
 \end{split}
 \label{eq:impliedinput}
\end{equation}
The context distribution $Q_C$ is directly observed in the
target population. The term \emph{marginal matching} refers
to integrating out $Y$ and matching the resulting distribution
of the inputs $X=(C,Z)$.

\paragraph{Choosing the correction by matching inputs.}
At the population level, we choose $\beta$ to minimize
$\KL(Q_X\|Q_{\beta,X})$. Because the candidates share the
target context distribution, and assuming the log-density
expectations are finite,
\begin{align}
 \KL(Q_X\|Q_{\beta,X})
 &=\E_{C\sim Q_C}
   \KL(Q_{Z\mid C}\|Q_{\beta,Z\mid C})\notag\\
 &=\text{constant}
   -\E_{Q_X}\log q_\beta(Z\mid C).
 \label{eq:inputkl}
\end{align}
The first equality expresses the comparison within each
context, averaged over target contexts. The constant in the
second equality does not depend on $\beta$. Minimizing the
divergence therefore amounts to maximizing the expected
conditional log likelihood of the target sensor readings.
For observed target inputs $(c_j^{\rm t},z_j^{\rm t})$,
we replace this expectation by its sample average to obtain
\[
 \widehat\beta\in\argmax_{\beta\in\mathcal B}
 \frac{1}{m}\sum_{j=1}^m
 \log q_\beta(z_j^{\rm t}\mid c_j^{\rm t}).
\]
It requires neither target outcomes nor a separate estimate
of the target sensor density.

For a simple example, fix a context and suppose the shared
measurement model is $Z=2Y+\varepsilon$, where
$\varepsilon\sim\N(0,1)$ is independent of $Y$.
A candidate outcome distribution $Y\sim\N(\mu,1)$ implies
$Z\sim\N(2\mu,5)$. Matching a target sensor distribution
$\N(2,5)$ therefore selects $\mu=1$. With a finite target
sample, maximum likelihood gives
$\widehat\mu=\overline Z_{\rm t}/2$, where
$\overline Z_{\rm t}$ is the target sample mean.
The outcome mean is inferred through its effect on the
observed readings.

Estimating an outcome correction by matching its implied input distribution 
to the target input distribution is an established principle. 
It has been used in label-shift and target-shift estimation \citep{SaerensM2002neco,duPlessisM2012icml,ZhangK2013icml}, 
sparse joint shift \citep{ChenL2022neurips, TascheD2024fods}, 
and structured exponential tilting \citep{MaityS2023iclr, Choi2026arxiv_extracp}.
Here we adopt this principle within the stable measurement model and study the consequences of 
the estimated correction for conformal coverage and interval length.

\paragraph{When matching identifies the outcome distribution.}
If the true outcome distribution belongs to the candidate family,
it gives zero input divergence. But another outcome distribution
might produce exactly the same sensor distribution and therefore
the same fit. To recover the outcome correction uniquely, we need
distinct admissible outcome distributions to produce distinct
sensor distributions. This is the injectivity requirement.

For a candidate outcome distribution $\nu$, define
\[
 (K_c\nu)(B)
 =\int k(B\mid y,c)\nu(\dd y),
\]
where $k(B\mid y,c)$ is the probability that the sensor reading
belongs to a measurable set $B$. Thus, $K_c$ maps an outcome
distribution to the sensor distribution it produces.
The following lemma is a direct application of the standard
injectivity argument within each context. The same uniqueness
requirement appears in target-shift estimation
\citep[Section~3.1, Assumption~$A_3^{\rm TarS}$]{ZhangK2013icml}.
For finite outcome spaces, related sparse-joint-shift results
use linear-independence or full-rank conditions
\citep[Theorem~1]{ChenL2022neurips}
\citep[Theorem~4.7]{TascheD2024fods}.
We use this established principle to state exactly what the
assumed measurement mechanism identifies in our model. The short proof is
included for completeness.

\bigskip
\begin{lemma}[Identification through a stable measurement mechanism]
\label{thm:proxyid}
Fix the source distribution $P$ and the shared measurement density $k$,
and consider target distributions satisfying
\eqref{eq:proxyfactor} and $Q\ll P$.
Suppose that, for $Q_C$-almost every $c$,
\[
 K_c\nu_1=K_c\nu_2
 \quad\Longrightarrow\quad
 \nu_1=\nu_2
\]
for all admissible outcome distributions $\nu_1$ and $\nu_2$.
Then the target input distribution $Q_{C,Z}$ uniquely
determines the joint target distribution $Q$.
Any two density ratios compatible with this input
distribution and the assumed model agree with probability
one under $P$.
\end{lemma}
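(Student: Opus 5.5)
The plan is to argue context by context, using the factorization \eqref{eq:proxyfactor} and the channel identity \eqref{eq:channelmatching}. Suppose $Q$ and $Q'$ are two target distributions, both satisfying \eqref{eq:proxyfactor} with the same shared $k$, both with $Q,Q'\ll P$, and both with the same input marginal $Q_{C,Z}=Q'_{C,Z}$. The goal is to show $Q=Q'$ and then deduce the statement about density ratios.

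First, marginalizing $Q_{C,Z}$ over $z$ gives $Q_C=Q'_C$, so the context component agrees. Next, disintegrate the common input law with respect to $C$. By the essential uniqueness of regular conditional distributions, $Q_{Z\mid C=c}=Q'_{Z\mid C=c}$ for $Q_C$-almost every $c$. By \eqref{eq:proxyfactor}, $Q_{Z\mid C=c}=K_c\,Q_{Y\mid C=c}$, and likewise $Q'_{Z\mid C=c}=K_c\,Q'_{Y\mid C=c}$. Hence $K_c\nu_1=K_c\nu_2$ with $\nu_1=Q_{Y\mid C=c}$ and $\nu_2=Q'_{Y\mid C=c}$. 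Both are admissible outcome distributions, since the targets lie in the considered class. The injectivity hypothesis then gives $\nu_1=\nu_2$ for $Q_C$-a.e.\ $c$.

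Reassembling, $q_C(c)q(y\mid c)k(z\mid y,c)$ agrees with its primed counterpart as a measure on $(C,Y,Z)$. The factors can differ only on a $Q_C$-null set of contexts, which carries no mass, so $Q=Q'$.

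For the density ratio, both $Q$ and $Q'$ are absolutely continuous with respect to $P$ and are equal as measures. The Radon--Nikodym derivatives $dQ/dP$ and $dQ'/dP$ therefore coincide $P$-a.s., by uniqueness of the Radon--Nikodym derivative. By \eqref{eq:proxyw}, any version of this ratio has the form $w((c,z),y)=\frac{q_C(c)}{p_C(c)}\frac{q(y\mid c)}{p(y\mid c)}$. Two such representations compatible with the same $Q_{C,Z}$ are versions of the same derivative and so agree $P$-a.s.

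The main obstacle is measure-theoretic bookkeeping rather than any substantive estimate. Three points need care:
\begin{itemize}
\item The conditional laws $Q_{Y\mid C=c}$ are defined only $Q_C$-a.e., while the injectivity hypothesis is likewise only $Q_C$-a.e.; the two exceptional null sets must be combined, and their union is still $Q_C$-null.
\item Equality of the ratios is claimed $P$-a.s., not $Q$-a.s. On the set where $w=0$, both ratios vanish $P$-a.s. because $Q(A)=\int_A w\,dP$ determines $w$ up to $P$-null sets.
\item Disintegrations must exist, which I will take for granted on standard Borel spaces.
\end{itemize}
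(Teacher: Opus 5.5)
Your proposal is correct and follows the same route as the paper's proof. Equal input laws give equal context marginals and equal conditional sensor laws for $Q_C$-almost every $c$; injectivity of $K_c$ then forces the conditional outcome laws to agree; the factorization reassembles the same joint $Q$; and uniqueness of the Radon--Nikodym derivative gives $P$-almost sure agreement of the ratios. Your remarks on combining the two $Q_C$-null exceptional sets and on the existence of disintegrations make explicit the measure-theoretic bookkeeping that the paper's short proof leaves implicit.
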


\begin{proof}
Any two compatible target distributions have the same
context distribution and the same conditional sensor
distribution. Injectivity therefore forces their conditional
outcome distributions to agree for $Q_C$-almost every context.
Substitution into \eqref{eq:proxyfactor} gives the same joint
target distribution. Since $Q\ll P$, its density ratio
relative to $P$ is unique with probability one under $P$.
\end{proof}

The lemma concerns identification at the population level.
If the family is misspecified, minimizing KL divergence
instead seeks the closest input distribution that the
model can produce. A good input fit then need not imply
an accurate outcome correction.

The identification requirement can also be expressed through
density ratios. For any $Q\ll P$,
\begin{equation}
 r_X(x)
 =\frac{q_X(x)}{p_X(x)}
 =\E_P[w(X,Y)\mid X=x].
 \label{eq:inputratio}
\end{equation}
The observed input ratio is therefore an average of the
joint ratio over the unobserved outcome. To identify the
joint ratio through marginal matching, distinct candidate
joint ratios must produce distinct input ratios.
A classifier distinguishing source from target inputs can
estimate $r_X$ through its class odds, adjusted for the
source and target sampling proportions.
Under covariate shift, $w(x,y)=r_X(x)$, so this input ratio
provides the required correction.
Under the stable measurement model, $w$ generally also
depends on $y$. Recovering that dependence requires the
measurement stability assumption and the identification condition.

Identification alone does not guarantee accurate estimation
from a finite sample. A known additive Gaussian measurement model
with nonzero loading gives an injective map from outcome to sensor
distributions. However, when the loading is small relative to the
noise, different outcome distributions
can produce very similar sensor distributions. A finite
target sample may then provide little information for
distinguishing them. Estimating the source measurement density adds
further uncertainty. These estimation difficulties arise
even when the stability assumption is correct. Whether
the measurement mechanism actually remains stable is a separate issue.
Matching the target input distribution alone cannot
establish that assumption.

\subsection{Estimation and predictive adjustment}
\label{subsec:fitratio}
\label{sec:proxy}

We now implement marginal matching with a conditional
exponential-tilt family.

\paragraph{Model for outcome changes.}
We specify the candidate outcome distributions using a
conditional exponential tilt of the source distribution,
\begin{equation}
 \begin{split}
 q_\beta(y\mid c)
 &=p(y\mid c)
   \exp\left\{\beta^\top t(y,c)-A(\beta,c)\right\},\\
 A(\beta,c)
 &=\log\E_P\!\left[
   e^{\beta^\top t(Y,c)}\mid C=c
   \right].
 \end{split}
 \label{eq:localtilt}
\end{equation}
Here $t(y,c)$ is a prespecified vector of functions of the
outcome and context. The normalizer $A(\beta,c)$ ensures
that $q_\beta(y\mid c)$ integrates to one within each
context. We restrict $\beta$ to values for which this
normalizer is finite in every relevant context.

For example, suppose $C$ is scalar and the source
conditional distribution is Gaussian, with variance
$v_P(c)$. Choosing $t(y,c)=(y,cy)^\top$ changes the
conditional mean by $v_P(c)(\beta_1+\beta_2c)$ while
leaving the conditional variance unchanged. Adding
a $y^2$ term also allows the conditional variance to
change, subject to the finite-normalizer requirement.

Normalization within each context allows us to specify
outcome changes separately from changes in context
frequencies. Under the stable measurement mechanism,
the resulting joint density ratio is
\[
 w_\beta((c,z),y)
 =\frac{q_C(c)}{p_C(c)}
  \exp\left\{\beta^\top t(y,c)-A(\beta,c)\right\}.
\]
Choosing $q_C=p_C$
preserves the context distribution while allowing
$Y\mid C$ to change.

\paragraph{Estimation from observed samples.}
We first use the source training data $\calD_{\rm tr}$ to
estimate the outcome distribution and conditional measurement density,
denoting the estimates by $p_{\widehat\theta}(y\mid c)$ and
$k_{\widehat\theta}(z\mid y,c)$. For each candidate $\beta$,
we construct $q_{\widehat\theta,\beta}(y\mid c)$ by substituting
$p_{\widehat\theta}(y\mid c)$ into \eqref{eq:localtilt} and
computing the normalizer under this same fitted distribution.

We hold these source estimates fixed while fitting $\beta$
from the unlabeled target inputs. Substituting the source
estimates into \eqref{eq:inputkl} and replacing the target
expectation by its sample average gives
\begin{equation}
 \widehat\beta\in\argmax_\beta
 \frac{1}{m}\sum_{j=1}^m
 \log\!\int
 q_{\widehat\theta,\beta}(y\mid c_j^{\rm t})
 k_{\widehat\theta}(z_j^{\rm t}\mid y,c_j^{\rm t})
 \dd y.
 \label{eq:proxylikelihood}
\end{equation}
Constraints or penalties may be used to regularize the fit.
We estimate the context ratio $r_C=q_C/p_C$ separately from
the source training and target context observations.
Combining the context and outcome corrections gives the
calibration weight
\begin{equation}
 \widehat w((c,z),y)
 =\widehat r_C(c)
  \frac{q_{\widehat\theta,\widehat\beta}(y\mid c)}
       {p_{\widehat\theta}(y\mid c)}.
 \label{eq:fittedproxyweight}
\end{equation}
All estimates and tuning choices use only $\calA$ and are
fixed before calibration. The accuracy of the resulting
weights depends on the source estimates, the choice of
outcome family, and the information about outcome changes
contained in the target sensor readings.

\paragraph{Predictive adjustment.}
The density ratio can also adapt the source predictive
distribution to the target population. At a fixed input,
we multiply the source predictive density by the ratio
and normalize over possible outcomes. Let $w_\beta$ denote
the ratio in \eqref{eq:proxyw} with $q(y\mid c)$ replaced
by $q_\beta(y\mid c)$. The corresponding predictive
distribution is
\begin{equation}
 q_\beta(y\mid x)
 =\frac{p(y\mid x)w_\beta(x,y)}
        {\int p(u\mid x)w_\beta(x,u)\dd u}.
 \label{eq:predictive}
\end{equation}
This is the standard conditional density-reweighting
identity, also used for predictive correction under SJS
with finite outcome classes \citep[Proposition~4.5]{TascheD2024fods}.
In practice, we substitute the fitted source predictive
distribution and $\widehat w$.
Factors depending only on the input, including the context
ratio, cancel because the input is fixed in this conditional
distribution. They remain necessary in the calibration
weights, which must account for differences in how often
inputs occur in the source and target populations.

To examine the effect of predictive adjustment, we compare
a source score (S) with an adjusted score (T), using the
same fitted weights and calibration sample.
For predictive mean $\mu_{\rm pred}(x)$ and variance
$v_{\rm pred}(x)>0$, a standard choice is
\begin{equation}
 s(x,y)
 =\frac{|y-\mu_{\rm pred}(x)|}{\sqrt{v_{\rm pred}(x)}}.
 \label{eq:stdscore}
\end{equation}
The source score uses the source predictive moments,
whereas the adjusted score uses the target-adjusted
predictive moments. We also use unstandardized absolute
residuals in the main Gaussian comparison.
With the true density ratio and exact weighted conformal
calibration, either score gives marginal target coverage
when fixed before calibration using only $\calA$.
Changing the score can affect the size of the prediction
sets, but shorter sets are not guaranteed.
\Cref{sec:calibration} describes the effects of estimated
weights and the finite weighted-quantile approximation.
Algorithm~\ref{alg:stablemeasurement} summarizes the full
procedure, from fitting the correction to producing a prediction set.

\subsection{What the measurement restriction rules out}
\label{subsec:jointambiguity}

Within a given context, the distribution of sensor readings can change because the outcome distribution changes, 
because the measurement process changes, or both. 
The following example compares the stable measurement model with covariate shift. 
Both can reproduce the same target input distribution while implying different outcome predictions. 
Thus, matching the target inputs alone cannot determine which assumption is appropriate.

Suppose the source model is
\[
 Y\mid C=c\sim\N(0,1),\qquad
 Z=\lambda Y+\varepsilon,\qquad
 \varepsilon\sim\N(0,\tau^2),
\]
where $\varepsilon$ is independent of $(C,Y)$,
$\lambda\ne0$, and $\tau^2>0$.
Write $\sigma_Z^2=\lambda^2+\tau^2$ for the source variance of $Z$
within each context. To allow the conditional distribution
of $Z$ given $Y$ and $C$ to change as well, consider a
joint-tilt family with separate coefficients for $Y$ and $Z$,
\begin{equation}
 \begin{split}
 q_{\beta_Y,\beta_Z}(y,z\mid c)
 &=p(y,z\mid c)
   \exp\{\beta_{Y,c}y+\beta_{Z,c}z-A_c^{\rm joint}\},\\
 A_c^{\rm joint}
 &=\tfrac12\bigl(
   \beta_{Y,c}^2+2\lambda\beta_{Y,c}\beta_{Z,c}
   +\sigma_Z^2\beta_{Z,c}^2
   \bigr).
 \end{split}
 \label{eq:jointfamily}
\end{equation}
A linear exponential tilt of a Gaussian distribution preserves
its covariance matrix and shifts each mean according to the
covariances with the tilted variables. Here the source variance of $Y$ is $1$,
the covariance between $Y$ and $Z$ is $\lambda$, and the variance
of $Z$ is $\sigma_Z^2$. The target means are therefore
\[
 \E_Q(Y\mid C=c)=\beta_{Y,c}+\lambda\beta_{Z,c},
 \qquad
 \E_Q(Z\mid C=c)=\lambda\beta_{Y,c}+\sigma_Z^2\beta_{Z,c}.
\]

Fix the target context distribution and suppose that
$Z\mid C=c\sim\N(\mu_{Z,c},\sigma_Z^2)$ in the target population.
Every coefficient pair satisfying
\[
 \lambda\beta_{Y,c}+\sigma_Z^2\beta_{Z,c}=\mu_{Z,c}
\]
reproduces this sensor distribution. This is one equation
for two tilt coefficients. Different pairs on this line give
different outcome means, so input matching alone does not
identify the outcome correction in this family.

The two shift assumptions resolve the ambiguity by fixing
different conditional distributions. Under \eqref{eq:jointfamily},
\[
 \begin{aligned}
 Z\mid Y=y,C=c
 &\sim\N(\lambda y+\tau^2\beta_{Z,c},\tau^2),\\
 Y\mid Z=z,C=c
 &\sim\N\!\left(
   \frac{\lambda z}{\sigma_Z^2}+\frac{\tau^2\beta_{Y,c}}{\sigma_Z^2},
   \frac{\tau^2}{\sigma_Z^2}
   \right).
 \end{aligned}
\]
Stable measurement keeps the first conditional distribution
equal to its source counterpart. This requires
$\beta_{Z,c}=0$, after which input matching gives
$\beta_{Y,c}=\mu_{Z,c}/\lambda$.
Covariate shift keeps the second conditional distribution
unchanged. This requires $\beta_{Y,c}=0$, after which input
matching gives $\beta_{Z,c}=\mu_{Z,c}/\sigma_Z^2$.
Consequently,
\[
 \begin{array}{ll}
 \text{Stable measurement model}
 &\E_Q(Y\mid C=c)=\mu_{Z,c}/\lambda,\\[2pt]
 \text{Covariate shift}
 &\E_Q(Y\mid C=c)=\lambda\mu_{Z,c}/\sigma_Z^2.
 \end{array}
\]
These means differ whenever $\mu_{Z,c}\ne0$, despite the
identical target input distributions.

For example, take $\lambda=1$, $\tau^2=1$, and
$\mu_{Z,c}=2$. Both models give the target sensor distribution
$\N(2,2)$, but the stable measurement model gives an outcome
mean of $2$, whereas covariate shift gives $1$.
At the same observed reading $z$, the predictive mean is
$z/2+1$ under stable measurement and $z/2$ under covariate shift.

The example separates identification within a model from
the choice of that model. Either restriction selects a unique
correction here, but matching the target inputs cannot distinguish
between them. In particular, a perfect match does not establish
measurement stability. The ambiguity arises from allowing both
tilts in \eqref{eq:jointfamily}. A more restricted joint-tilt
family can be identifiable.

\section{Conformal calibration, context, and estimation error}
\label{sec:gaussian}

This section separates the benefit of using context from
the cost of estimating the target correction. We first recall
the weighted-conformal guarantee and its fitted-ratio bound.
We then show how context can shorten intervals when the target
distribution is known. A Gaussian calculation evaluates the
coverage lost when the correction must be estimated, and a
comparison of shared and separate corrections explains how
to retain context while limiting that estimation cost.

\subsection{Weighted calibration with oracle and estimated ratios}
\label{sec:calibration}

The density ratio gives greater weight to source observations
that are more prevalent in the target population. Weighted
conformal calibration uses these weights to compare a candidate
outcome with the source calibration observations. With the true
ratio, its guarantee follows from weighted exchangeability
\citep{TibshiraniR2019neurips}. Estimating the ratio introduces
an additional error, which can be bounded by the discrepancy
between the true target distribution and the distribution
represented by the fitted weights \citep{LeiL2021jrsssb}.

Fix $0<\alpha<1$ and condition on the source training and
unlabeled target data $\calA$. The score is then fixed, while
the calibration observations and the new target pair remain
independent of $\calA$.
Write $S_i=s(X_i,Y_i)$ and $w_i=w(X_i,Y_i)$ for
$i=1,\ldots,n$. Larger scores indicate poorer agreement with
the predictor. For a candidate pair $(x,y)$, define
\begin{equation}
 \begin{split}
 p_w(x,y)
 &=\frac{w(x,y)+\sum_{i=1}^n w_i\ind\{S_i\ge s(x,y)\}}
 {w(x,y)+\sum_{i=1}^n w_i},\\
 \calC_w(x)&=\{y:p_w(x,y)>\alpha\}.
 \end{split}
 \label{eq:wpvalue}
\end{equation}
Set $p_w(x,y)=1$ if the denominator is zero.
The numerator includes the candidate's own weight and the
weights of calibration observations with scores at least as
large. A candidate is retained when this weighted upper-tail
probability exceeds $\alpha$.

For the true joint ratio $w=\dd Q/\dd P$,
\[
 \Prob\{Y_*\in\calC_w(X_*)\mid\calA\}\ge1-\alpha.
\]
This is a direct application of the weighted-exchangeability
result of \citet{TibshiraniR2019neurips} to the prediction pair
$(X,Y)$. Conditional on the unordered collection of all $n+1$
observations, the probability that an observation is the target
draw is proportional to its weight. The rank rule excludes
indices with total probability at most $\alpha$, including
when scores are tied. The guarantee holds for any score
constructed from $\calA$.

When $w(x,y)$ depends on the candidate outcome, the exact set
need not be a bounded interval. A sufficiently large candidate
weight can retain even outcomes with large scores. For the
interval-length comparisons, we therefore use the finite
weighted-quantile (FWQ) rule
\begin{equation}
 \begin{split}
 \widehat q_w
 &=\inf\left\{s_0:
 \frac{\sum_{i=1}^n w_i\ind\{S_i\le s_0\}}
      {\sum_{i=1}^n w_i}\ge1-\alpha\right\},\\
 \calC_{\rm fwq}(x)&=\{y:s(x,y)\le\widehat q_w\}.
 \end{split}
 \label{eq:fwq}
\end{equation}
If the calibration weights sum to zero, return $\calY$.
Otherwise, this rule uses only the calibration weights and
produces a subset of $\calC_w(x)$. For the standardized
residual score in \eqref{eq:stdscore}, it gives
$\mu_{\rm pred}(x)\pm\widehat q_w\sqrt{v_{\rm pred}(x)}$.
The following lemma bounds the coverage reduction from
omitting the candidate weight.

The lemma is a supporting consequence of the weighted-index
argument and the largest-weight quantile bound in
\citet{TibshiraniR2019neurips,LeiL2021jrsssb}, adapted here
to the FWQ rule. An elementary proof allowing tied scores
is given in \Cref{app:fwqproof}.

\bigskip
\begin{lemma}[Finite-quantile discretization bound]
\label{prop:fwq}
Under \eqref{eq:data}, use the true ratio and a score fixed
by $\calA$. Put $w_{n+1}=w(X_*,Y_*)$. Then
\begin{equation}
 \Prob\{Y_*\notin\calC_{\rm fwq}(X_*)\mid\calA\}
 \le\alpha+\E[\pi_{\max}\mid\calA],
 \qquad
 \pi_{\max}
 =\max_{1\le i\le n+1}\frac{w_i}{\sum_{\ell=1}^{n+1}w_\ell}.
 \label{eq:fwqcoverage}
\end{equation}
\end{lemma}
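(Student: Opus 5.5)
The plan is to run the standard weighted-exchangeability argument on the augmented collection of $n+1$ pairs and compare the FWQ threshold with the oracle weighted quantile that includes the test weight. Throughout, I condition on $\calA$, so the score $s$ and the ratio $w$ are fixed functions. Write $Z_i=(X_i,Y_i)$ for $i\le n$ and $Z_{n+1}=(X_*,Y_*)$, with $S_{n+1}=s(X_*,Y_*)$.

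\textbf{Step 1 (weighted index).} Condition also on the unordered multiset $E=\{Z_1,\ldots,Z_{n+1}\}$. By weighted exchangeability \citep{TibshiraniR2019neurips}, the test point is the element of index $k$ with probability $p_k=w_k/\sum_{\ell}w_\ell$. This holds because the joint density of $(Z_1,\ldots,Z_n,Z_{n+1})$ is $\prod_{i\le n}p(z_i)\,q(z_{n+1})=\prod_{i\le n+1}p(z_i)\,w(z_{n+1})$. With ties among the $Z$'s, I would index the elements so that the event is stated in terms of the index, not the value. The miscoverage event $\{S_{n+1}>\widehat q_w\}$ then has conditional probability $\sum_k p_k\ind\{S_k>\widehat q^{(k)}\}$. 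Here $\widehat q^{(k)}$ is the FWQ threshold computed from the $n$ calibration points that remain when element $k$ is removed.

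\textbf{Step 2 (compare leave-one-out thresholds with a full quantile).} Let $F(s_0)=\sum_{\ell\le n+1}p_\ell\ind\{S_\ell\le s_0\}$ and $q^\star=\inf\{s_0:F(s_0)\ge1-\alpha\}$. If $S_k>\widehat q^{(k)}$, then the remaining weights satisfy
\[
\sum_{\ell\ne k}w_\ell\ind\{S_\ell\le\widehat q^{(k)}\}\ge(1-\alpha)\sum_{\ell\ne k}w_\ell .
\]
Adding element $k$ to the normalization, whose score lies above the threshold, gives $F(\widehat q^{(k)})\ge(1-\alpha)(1-p_k)\ge1-\alpha-p_k$. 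I would avoid a dependence on $k$ inside the threshold by bounding $p_k\le\pi_{\max}$. Every miscovering index $k$ then has $S_k>\widehat q^{(k)}\ge q^{\star\star}$, where $q^{\star\star}=\inf\{s_0:F(s_0)\ge1-\alpha-\pi_{\max}\}$. Hence the miscovering indices are contained in $\{k:S_k>q^{\star\star}\}$. The total $p$-mass of this set is $1-F(q^{\star\star})\le\alpha+\pi_{\max}$, by right-continuity of $F$ and the definition of the infimum. Ties are handled automatically because $F$ counts $\le$, and the infimum is attained.

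\textbf{Step 3 (integrate).} The conditional miscoverage given $(E,\calA)$ is at most $\alpha+\pi_{\max}$, and $\pi_{\max}$ is a symmetric function of $E$. Taking expectations given $\calA$ yields \eqref{eq:fwqcoverage}. The degenerate case $\sum_{i\le n}w_i=0$ returns $\calY$, which never miscovers, so it only helps. A zero total weight on all $n+1$ points is a null event under $Q$, since $w_{n+1}>0$ almost surely.

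\textbf{Main obstacle.} The hard step is Step 2. The leave-one-out thresholds differ across $k$, and the naive comparison $(1-\alpha)(1-p_k)$ mixes a $k$-dependent loss into the bound. The fix is to replace each $p_k$ by the uniform $\pi_{\max}$, which makes the comparison threshold $k$-free. One must also check carefully that a strict inequality $S_k>\widehat q^{(k)}$ still implies $S_k>q^{\star\star}$ when scores tie. This uses the monotonicity of $F$: $F(\widehat q^{(k)})\ge1-\alpha-\pi_{\max}$ implies $\widehat q^{(k)}\ge q^{\star\star}$.
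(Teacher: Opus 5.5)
Your proof is correct, but the key step takes a different route from the paper's. Both proofs start the same way: the weighted-index argument conditional on the unordered collection. Both also end the same way, by averaging a bound that is symmetric in the collection. The difference is in how you bound the total mass of the failing indices.

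The paper uses an upper-tail condition. Failure at $k$ forces $\sum_{j\ne k:S_j\ge S_k}p_j\le\alpha$. Let $k^*$ be the smallest-score index among all indices satisfying this condition. Every other such index lies in the upper tail of $S_{k^*}$, so the whole set has mass at most $\alpha+p_{k^*}$.

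You instead compare each leave-one-out threshold $\widehat q^{(k)}$ with one $k$-free quantile $q^{\star\star}$ of the augmented weighted distribution, taken at the relaxed level $1-\alpha-\pi_{\max}$. All failing indices then lie in the single upper tail above $q^{\star\star}$.

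What each route buys:
\begin{itemize}
\item Yours isolates a threshold that does not depend on which index is the test point. This is closer in form to the augmented weighted-quantile arguments of Tibshirani et al.\ and Lei and Cand\`es.
\item The paper's needs no quantile-function facts such as right-continuity or attainment of the infimum. It also gives the weakly sharper conditional bound $\alpha+p_{k^*}$ before relaxing to $\pi_{\max}$.
\end{itemize}

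One small imprecision: you assert that the infimum defining $q^{\star\star}$ is attained. This fails when $1-\alpha-\pi_{\max}\le0$, because then every $s_0$ qualifies and $q^{\star\star}=-\infty$. In that case the conditional bound is trivial, since $\alpha+\pi_{\max}\ge1$. Nothing breaks, but you should set this case aside explicitly before using $F(q^{\star\star})\ge1-\alpha-\pi_{\max}$.
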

The extra term is small when no single observation carries a
large fraction of the total weight. It accounts for the finite
quantile approximation even when the ratio is known exactly.

To describe the effect of estimating the ratio, let
$\widehat w\ge0$ be fixed by $\calA$ and suppose
$0<\E_P\widehat w<\infty$. Reweighting the true source
distribution by $\widehat w$ defines
\begin{equation}
 \frac{\dd R_{\widehat w}}{\dd P}=\overline w,
 \qquad
 \overline w=\frac{\widehat w}{\E_P\widehat w}.
 \label{eq:effective}
\end{equation}
Here the expectation is over a fresh source observation, with
the fitted function held fixed. The distribution $R_{\widehat w}$
is the target distribution represented by the estimated weights.
The normalizing constant is needed to define this distribution,
but cancels from the calibration rules.

Exact weighted ranks with $\widehat w$ have coverage at least
$1-\alpha$ under $R_{\widehat w}$. Changing the test distribution
from $R_{\widehat w}$ to $Q$ changes the coverage probability
by at most
\[
 \TV(Q,R_{\widehat w})
 =\tfrac12\E_P|w-\overline w|.
\]
This is the total-variation transfer used for estimated
covariate weights by \citet{LeiL2021jrsssb}, applied here
to the joint distribution of inputs and outcomes.
Combining it with \Cref{prop:fwq} gives the following
immediate corollary.

\bigskip
\begin{corollary}[Fitted finite quantiles]
\label{cor:fwqfitted}
Under \eqref{eq:data}, let the score and a nonnegative fitted
weight $\widehat w$ be fixed by $\calA$, with
$0<\E_P\widehat w<\infty$. The FWQ set constructed with
$\widehat w$ satisfies
\begin{equation}
 \Prob\{Y_*\in\calC_{\rm fwq}(X_*)\mid\calA\}
 \ge1-\alpha
 -\E_{P^n\otimes R_{\widehat w}}
    [\widehat\pi_{\max}\mid\calA]
 -\TV(Q,R_{\widehat w}),
 \label{eq:fittedfwq}
\end{equation}
where $\widehat\pi_{\max}$ is the largest normalized weight
among the $n$ source calibration pairs and one independent
pair drawn from $R_{\widehat w}$, all weighted by $\widehat w$.
The result also holds when the model family or score is
selected using $\calA$ alone.
\end{corollary}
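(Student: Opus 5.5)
The plan is to apply \Cref{prop:fwq} under a surrogate test distribution and then transfer the result to $Q$ by total variation. Throughout we condition on $\calA$, so that the score $s$ and the fitted weight $\widehat w$ are fixed measurable functions. The calibration pairs remain i.i.d.\ from $P$, and the test pair is independent of them.

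\textbf{Step 1: the surrogate problem.} Let $(\tilde X,\tilde Y)\sim R_{\widehat w}$ be independent of $\calD_{\rm cal}$. The Radon--Nikodym derivative of $R_{\widehat w}$ with respect to $P$ is exactly $\overline w=\widehat w/\E_P\widehat w$. Hence $\overline w$ plays the role of the true ratio in \Cref{prop:fwq}, with $R_{\widehat w}$ in place of $Q$.

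The FWQ rule in \eqref{eq:fwq} is invariant to rescaling all calibration weights by a common positive constant. The same holds for the normalized weights inside $\pi_{\max}$. So building the set with $\widehat w$ or with $\overline w$ gives the same set and the same value of $\widehat\pi_{\max}$. We must check that the proof of \Cref{prop:fwq} in \Cref{app:fwqproof} uses only two facts: the test density ratio relative to $P$ equals the weight function, and the score is fixed and independent of the calibration data. It does not need $Q$ specifically.

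Granting this, we obtain
\[
 \Prob_{P^n\otimes R_{\widehat w}}\{\tilde Y\notin\calC_{\rm fwq}(\tilde X)\mid\calA\}\le\alpha+\E_{P^n\otimes R_{\widehat w}}[\widehat\pi_{\max}\mid\calA].
\]
The edge case in which the calibration weights sum to zero returns $\calY$. That case only helps coverage.

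\textbf{Step 2: transfer to $Q$.} Define the event function
\[
 g(x,y)=\Prob_{P^n}\{y\in\calC_{\rm fwq}(x)\mid\calA\}\in[0,1].
\]
This function is well defined because $\calC_{\rm fwq}$ depends only on $\calA$ and $\calD_{\rm cal}$, both independent of the test pair. By Fubini, the two coverage probabilities are $\E_Q g$ and $\E_{R_{\widehat w}} g$. Since $0\le g\le 1$,
\[
 |\E_Q g-\E_{R_{\widehat w}} g|\le\TV(Q,R_{\widehat w}).
\]
Moreover, $\TV(Q,R_{\widehat w})=\tfrac12\E_P|w-\overline w|$, because both measures are absolutely continuous with respect to $P$ (using $Q\ll P$). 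Combining this with Step 1 gives \eqref{eq:fittedfwq}.

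\textbf{Step 3: model selection.} The claim for models selected using $\calA$ alone is immediate. The argument conditions on $\calA$, so any family or score chosen as a function of $\calA$ is a fixed object in Steps 1--2.

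\textbf{Main obstacle.} The only delicate point is Step 1. There we need \Cref{prop:fwq} to hold for an arbitrary reference distribution whose $P$-density is the weight function, including when ties occur in the scores. If the appendix proof is tied to $Q$, we would redo its weighted-index argument verbatim with $R_{\widehat w}$. The argument conditions on the multiset of the $n+1$ pairs; each index is then the test index with probability $\overline w_i/\sum_\ell\overline w_\ell$. Omitting the candidate weight costs at most the largest such share.
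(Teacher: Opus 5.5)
Your proposal is correct and follows the paper's proof: under $R_{\widehat w}$ the normalized weight $\overline w$ is the true ratio relative to $P$, rescaling by a common constant leaves the FWQ set and $\widehat\pi_{\max}$ unchanged, so the FWQ lemma applies. The coverage then transfers to $Q$ at a cost of at most $\TV(Q,R_{\widehat w})$. Your Step~2, which averages the bounded conditional coverage function $g$, is simply an explicit form of the paper's identity $\TV(P^n\otimes Q,P^n\otimes R_{\widehat w})=\TV(Q,R_{\widehat w})$.
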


The bound separates error from omitting the candidate weight
from error in estimating the target distribution. For exact ranks,
only the total-variation term remains. In particular, if
$\widehat w=\widehat q/\widehat p$, then
$R_{\widehat w}\propto P\widehat q/\widehat p$.
It need not equal $\widehat Q$, because calibration observations
come from the true source distribution $P$. Errors in the
source model therefore also affect the fitted-weight guarantee.

Algorithm~\ref{alg:stablemeasurement} collects the estimation
steps from \Cref{subsec:fitratio} and the two calibration rules.
The displayed score uses
standardized residuals. For an unstandardized residual score,
omit division by $\sqrt{v_{\rm pred}(x)}$ and the corresponding factor
in the returned FWQ interval.

\begin{algorithm}[ht!]
\caption{Stable measurement adjustment and weighted conformal prediction}
\label{alg:stablemeasurement}
\small
\begin{algorithmic}[1]
\Require Independent samples $\calD_{\rm tr},\calD_{\rm tar},\calD_{\rm cal}$,
level $\alpha$, test input $x_*=(c_*,z_*)$, a prespecified tilt family,
score S or T, and FWQ or exact-rank calibration.
\Ensure A prediction set $\widehat{\calC}(x_*)$.
\Statex \textit{Fit using only $\calA=(\calD_{\rm tr},\calD_{\rm tar})$.}
\State Fit $p_{\widehat\theta}(y\mid c)$ and
$k_{\widehat\theta}(z\mid y,c)$ on $\calD_{\rm tr}$ and form
$p_{\widehat\theta}(y\mid c,z)\propto
p_{\widehat\theta}(y\mid c)k_{\widehat\theta}(z\mid y,c)$,
normalized over $y$.
\State Estimate $\widehat r_C(c)$ from source training and target contexts.
\State Fit $\widehat\beta$ by the target-input likelihood
\eqref{eq:proxylikelihood}, using the fitted-source normalizer in
\eqref{eq:localtilt}. Choose any constraints or penalties using only $\calA$.
\State Set $\widehat w((c,z),y)\gets\widehat r_C(c)
q_{\widehat\theta,\widehat\beta}(y\mid c)/p_{\widehat\theta}(y\mid c)$.
\State For score S, use the source predictive distribution.
For T, reweight it by \eqref{eq:predictive} with $\widehat w$.
\State Compute the chosen predictive mean $\mu_{\rm pred}(x)$ and variance $v_{\rm pred}(x)>0$
and set $s(x,y)\gets|y-\mu_{\rm pred}(x)|/\sqrt{v_{\rm pred}(x)}$.
\State Freeze the score, weights, and all model choices before calibration.
\Statex \textit{Calibrate using $\calD_{\rm cal}$.}
\State Compute $S_i=s(X_i,Y_i)$ and $\widehat w_i=\widehat w(X_i,Y_i)$
for $i=1,\ldots,n$, and set $W\gets\sum_{i=1}^n\widehat w_i$.
\If{exact-rank calibration is selected}
 \State Evaluate $\widehat p(x_*,y)$ by \eqref{eq:wpvalue} with
 $\widehat w$, retaining the candidate weight $\widehat w(x_*,y)$.
 Set $\widehat p(x_*,y)=1$ if the denominator is zero.
 \State \Return $\{y\in\calY:\widehat p(x_*,y)>\alpha\}$.
\Else \Comment{FWQ calibration used in the experiments}
 \If{$W=0$}
  \State \Return $\calY$.
 \EndIf
 \State Compute the weighted $(1-\alpha)$ quantile
 $\widehat q_{\widehat w}$ by \eqref{eq:fwq}, using only calibration weights.
 \State \Return $\{y\in\calY:|y-\mu_{\rm pred}(x_*)|
 \le\widehat q_{\widehat w}\sqrt{v_{\rm pred}(x_*)}\}$.
\EndIf
\end{algorithmic}
\end{algorithm}

\subsection{Prediction benefit of context}
\label{subsec:contextvalue}

The same sensor reading can imply different outcomes in
different contexts. For example, identical readings under
different weather conditions may correspond to different
pollutant concentrations. Using context allows a prediction
interval to account for these differences. We compare intervals
that use both $(C,Z)$ with intervals that use $Z$ alone,
assuming that the target distribution is known. This separates
the predictive benefit of context from the cost of estimating
the outcome correction.

Consider a Gaussian conditional outcome distribution with
the same variance at every input. Intervals centered at the
conditional mean can then achieve the desired coverage with
the same width for every observation. The next proposition
shows that they minimize expected interval length and that
ignoring informative context requires longer intervals on
average. This is a Gaussian consequence of the classical
minimum-volume-set principle \citep{ScottC2006jmlr}.
Write $\Phi$ and $\phi$ for the standard normal distribution
function and density.

\bigskip
\begin{proposition}[Context and minimum interval length]
\label{prop:contextvalue}
Suppose that, under $Q$,
\[
 Y\mid C,Z\sim\N(\mu_{\rm pred}(C,Z),\sigma^2)
\]
with constant $\sigma>0$. Every interval $I(C,Z)$ with finite
expected length and marginal coverage at least $1-\alpha$
satisfies
\begin{equation}
 \E_Q|I(C,Z)|\ge2\sigma z_\alpha,
 \qquad z_\alpha=\Phi^{-1}(1-\alpha/2).
 \label{eq:contextlengthbound}
\end{equation}
Equality is attained by
$\mu_{\rm pred}(C,Z)\pm\sigma z_\alpha$.
If $\Var_Q\{\mu_{\rm pred}(C,Z)\mid Z\}>0$ on a set of
positive probability, every interval depending only on $Z$
and meeting the same coverage requirement has strictly
greater expected length.
\end{proposition}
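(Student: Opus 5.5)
The plan is a Lagrangian (Neyman--Pearson-type) argument that works pointwise in the input, followed by a convexity argument in the coverage level. First fix an input $x=(c,z)$ and let $I(x)$ be any measurable set with conditional coverage $\gamma(x)=Q\{Y\in I(x)\mid X=x\}$. Since the conditional density is $\N(\mu_{\rm pred}(x),\sigma^2)$, which is symmetric and unimodal, the shortest set with conditional mass $\gamma$ is the centered interval $\mu_{\rm pred}(x)\pm\sigma\Phi^{-1}((1+\gamma)/2)$. The pointwise bound is therefore
\[
|I(x)|\ge g(\gamma(x)),\qquad g(\gamma)=2\sigma\Phi^{-1}\bigl(\tfrac{1+\gamma}{2}\bigr).
\]
This is the minimum-volume-set fact of \citet{ScottC2006jmlr}, which I would check directly. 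Swap any part of $I(x)$ outside the centered interval for an equal-mass part inside it. The density is larger inside, so the swap needs less Lebesgue measure.

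Second, I will show that $g$ is convex on $[0,1)$. Differentiating gives $g'(\gamma)=\sigma/\phi(\Phi^{-1}((1+\gamma)/2))$. As $\gamma$ increases the argument moves away from $0$, so $\phi$ decreases and $g'$ increases. Marginal coverage means $\E_Q\gamma(X)\ge1-\alpha$. Jensen's inequality and the monotonicity of $g$ then give
\[
\E_Q|I|\ge\E_Q g(\gamma(X))\ge g(\E_Q\gamma(X))\ge g(1-\alpha)=2\sigma z_\alpha.
\]
If $\gamma(X)=1$ on a set of positive probability, the length there is infinite. The finite-expected-length hypothesis excludes this, so $\gamma<1$ almost surely. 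Equality is attained by $\mu_{\rm pred}\pm\sigma z_\alpha$, which has conditional coverage exactly $1-\alpha$ at every input.

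Third, for sets depending only on $Z$, write $I(Z)$ and condition on $Z=z$. Under $Q$ the conditional law of $Y$ given $Z=z$ is the mixture over $C$ of $\N(\mu_{\rm pred}(C,z),\sigma^2)$. This is a Gaussian convolved with the law $\Lambda_z$ of $\mu_{\rm pred}(C,z)$ given $Z=z$. Applying the same two steps with this conditional law gives $\E_Q|I(Z)|\ge\E_Q h_Z(\gamma(Z))$, where $h_z(\gamma)$ is the minimal Lebesgue measure of a set carrying conditional mass $\gamma$ under the mixture.

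The key claim, and the main obstacle, is that $h_z(\gamma)\ge g(\gamma)$ for every $\gamma$, with strict inequality whenever $\Lambda_z$ is nondegenerate. My first attempt uses the density bound. For any set $B$, the mixture mass is $\int\Phi$-mass of $B$ under $\N(m,\sigma^2)\,\Lambda_z(\dd m)$, and each Gaussian assigns at most $\Phi$-mass of the centered interval of length $|B|$. So the mixture mass is at most $2\Phi(|B|/(2\sigma))-1$, which is strict unless $B$ is almost surely a centered interval for $\Lambda_z$-almost every $m$. That fails when $\Lambda_z$ is nondegenerate.

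To turn this strict pointwise inequality into a strict inequality in expectation, I would argue by contradiction. Suppose $\E_Q|I(Z)|=2\sigma z_\alpha$. Then every inequality in the chain is tight, and in particular $|I(z)|=g(\gamma(z))$ almost surely. On the positive-probability set where $\Var_Q\{\mu_{\rm pred}\mid Z\}>0$, this contradicts the strict bound, provided $\gamma(z)>0$ there. If instead $\gamma(z)=0$ on that set, the strict convexity of $g$ forces a strict Jensen gap, since $\gamma$ would then be nonconstant, unless coverage is zero almost surely, which is impossible. Handling this $\gamma=0$ edge case and the strict-convexity bookkeeping is the fiddly part.
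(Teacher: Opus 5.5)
Your proposal is correct and is essentially the paper's argument in dual form. The paper applies Jensen's inequality to the concave coverage-versus-length function $f(\ell)=2\Phi\{\ell/(2\sigma)\}-1$, giving $1-\alpha\le\E_Q f(L)\le f(\E_Q L)$ with no $\gamma=1$ edge case, whereas you apply Jensen to the convex inverse $g=f^{-1}$ evaluated at the conditional coverage. For strictness, both arguments rest on strict Jensen and strict optimal centering of the $Z$-measurable set relative to $\mu_{\rm pred}(C,Z)$; the paper splits into nonconstant versus constant length, while you condition on $Z$ and argue by contradiction that equality would force $\gamma\equiv1-\alpha$ and $|I|=g(\gamma)$ almost surely.
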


The conditional-variance condition means that context helps
determine the outcome mean even after the sensor reading is
known. In this case, an interval that ignores context cannot
attain the minimum expected length available to an interval
that uses it. The proof in \Cref{app:contextproof} combines
the optimal centering of a normal interval with Jensen's
inequality and its equality conditions.

\subsection{Coverage with oracle and fitted Gaussian corrections}
\label{subsec:scalarlearning}

We compare an oracle procedure that uses the true target
correction with a procedure that estimates the correction from
$m$ unlabeled target readings. Both are evaluated under the
same true target distribution $Q$. We assume that the source
distribution and measurement mechanism are known, and let the
independent source calibration size $n$ tend to infinity.
This isolates coverage error from estimating the correction.
Our aim is to quantify the resulting coverage gap relative
to the oracle and the additional cost of estimating the variance.

\paragraph{Gaussian model and estimation.}
Consider one outcome and one sensor, with no context variable.
This is the label-shift special case of the stable measurement
model. Under the source distribution $P$, let $Y\sim\N(0,1)$.
Under the target distribution $Q$, suppose
\begin{equation}
 Y\sim\N(\mu,v),\qquad
 Z=\lambda Y+\varepsilon,\qquad
 \varepsilon\sim\N(0,\tau^2),\qquad
 \sigma_Z^2=\lambda^2v+\tau^2,
 \label{eq:scalar}
\end{equation}
where $v>0$, $\lambda\ne0$, and $\tau^2>0$, with noise
independent of $Y$. The measurement mechanism is shared by
the two populations. We treat the source distribution and
the measurement parameters $(\lambda,\tau^2)$ as known.
Under $Q$, the sensor reading satisfies
$Z\sim\N(\lambda\mu,\sigma_Z^2)$, so its mean and variance
identify $(\mu,v)$.

Write $p_Y$ and $q_Y$ for the source and target outcome
densities, and $q(z\mid y)=p(z\mid y)$ for the shared
measurement density. Let $\phi_{\mu,v}$ denote the normal
density with mean $\mu$ and variance $v$. Then
\[
 p_Y(y)=\phi_{0,1}(y),\qquad
 q_Y(y)=\phi_{\mu,v}(y),\qquad
 q(z\mid y)=\phi_{\lambda y,\tau^2}(z).
\]

From $m\ge2$ independent target readings, estimate $(\mu,v)$ by
\begin{equation}
 \widehat\mu =\overline Z/\lambda,\qquad
 \widehat v
 =\left[\frac{S_Z^2-\tau^2}{\lambda^2}
   \right]_{[v_{\min},v_{\max}]}, 
 \label{eq:targetmle}
\end{equation}
where
\begin{equation*} 
\overline Z=m^{-1}\sum_{j=1}^m Z_j^{\rm t}, \qquad
 S_Z^2=\frac1m\sum_{j=1}^m(Z_j^{\rm t}-\overline Z)^2,
\end{equation*}
and the brackets clip the variance estimate to fixed bounds
$0<v_{\min}<v_{\max}<\infty$ chosen before fitting.
These are the constrained maximum-likelihood estimates.
The lower bound ensures a positive outcome variance even
when the sample sensor variance is no larger than the
known noise variance.

The fitted outcome density is
$\widehat q_Y(y)=\phi_{\widehat\mu,\widehat v}(y)$.
Because the measurement density is shared and known, the
fitted joint density ratio reduces to
\[
 \widehat w(y)
 =\frac{\widehat q_Y(y)}{p_Y(y)}
 =\frac{\phi_{\widehat\mu,\widehat v}(y)}{\phi_{0,1}(y)}.
\]
Applying Bayes' rule to the fitted outcome density and the
measurement likelihood gives
\[
 \begin{aligned}
 \widehat q(y\mid z)
 &\propto\widehat q_Y(y)q(z\mid y)\\
 &=\phi_{\widehat\mu,\widehat v}(y)\phi_{\lambda y,\tau^2}(z)\\
 &\propto\exp\left\{-\frac{(y-\widehat\mu)^2}{2\widehat v}
                    -\frac{(z-\lambda y)^2}{2\tau^2}\right\}.
 \end{aligned}
\]
Completing the square in $y$ gives the fitted conditional distribution
$Y\mid Z=z\sim\N(\widehat\mu_{\rm pred}(z),\widehat v_{\rm pred})$, with
\begin{equation}
 \widehat\mu_{\rm pred}(z)
 =\widehat\mu+\widehat\gamma(z-\lambda\widehat\mu),\qquad
 \widehat v_{\rm pred}=\frac{\widehat v\tau^2}{\lambda^2\widehat v+\tau^2},
 \label{eq:gaussianpredictive}
\end{equation}
where $\widehat\gamma
 =\frac{\lambda\widehat v}{\lambda^2\widehat v+\tau^2}$
is the regression coefficient of $Y$ on
$Z$ under the fitted model. It converts the sensor's deviation
from its expected reading, $z-\lambda\widehat\mu$, into an
adjustment to the estimated outcome mean $\widehat\mu$.
Estimating the outcome variance therefore affects the predictive
mean, the predictive variance, and the calibration weights.

\paragraph{Oracle and fitted prediction intervals.}
Let $\mu_{\rm pred}$ and $v_{\rm pred}$ denote the predictive
moments in \eqref{eq:gaussianpredictive} with
$(\widehat\mu,\widehat v)$ replaced by $(\mu,v)$.
The oracle uses the true ratio $w(y)=q_Y(y)/p_Y(y)$ and these
predictive moments. The fitted procedure uses $\widehat w$
and the fitted moments. Both use standardized residual scores.
Their large-calibration coverage is evaluated through the intervals
\[
 \begin{aligned}
 I_{\rm or}(z)&=\mu_{\rm pred}(z)\pm z_\alpha\sqrt{v_{\rm pred}},\\
 I_\infty(z)&=\widehat\mu_{\rm pred}(z)
                   \pm z_\alpha\sqrt{\widehat v_{\rm pred}}.
 \end{aligned}
\]

\paragraph{Coverage gap.}
The oracle interval has target coverage $1-\alpha$. Let
$C_m(\lambda,v)$ denote the coverage of $I_\infty$, averaged
over the target fitting sample and a new target pair under $Q$.
We compare the procedures through the signed coverage gap
\begin{equation}
 \Delta_m=(1-\alpha)-C_m(\lambda,v).
 \label{eq:coveragegap}
\end{equation}
Positive values indicate undercoverage, zero means nominal
coverage, and negative values indicate overcoverage.
\Cref{prop:scaleexact} characterizes this gap for fixed $m$.
\Cref{prop:scalecost} gives its leading behavior as $m$ grows.

\bigskip
\begin{proposition}[Exact coverage and the gap from the oracle]
\label{prop:scaleexact}
Under \eqref{eq:scalar}, fix $0<\alpha<1$ and $m\ge2$.
Use \eqref{eq:targetmle} for the fitted correction, with score
$\widehat s(z,y)=|y-\widehat\mu_{\rm pred}(z)|/\sqrt{\widehat v_{\rm pred}}$.
The target fitting sample, source calibration sample, and new
target pair are mutually independent, with independent
observations within each sample.
For either FWQ or exact weighted ranks, the following coverage
limits hold as $n\to\infty$ with $m$ fixed.

The oracle procedure has limiting target coverage
$\Prob_Q\{Y\in I_{\rm or}(Z)\}=1-\alpha$.
For the fitted procedure, coverage conditional on the fitted
parameters converges to
\begin{equation}
 \begin{split}
 C(\widehat\mu,\widehat v)
 &=\Prob_Q\{Y\in I_\infty(Z)\mid\widehat\mu,\widehat v\}\\
 &=\Phi\left(\frac{z_\alpha\sqrt{\widehat v_{\rm pred}}-b}{\sigma_{\rm res}}\right)
  -\Phi\left(\frac{-z_\alpha\sqrt{\widehat v_{\rm pred}}-b}{\sigma_{\rm res}}\right),\\
 b&=(\mu-\widehat\mu)(1-\lambda\widehat\gamma),\\
 \sigma_{\rm res}^2&=v(1-\lambda\widehat\gamma)^2+\widehat\gamma^2\tau^2,
 \end{split}
 \label{eq:conditionalcoverage}
\end{equation}
where $b$ and $\sigma_{\rm res}^2$ are the mean and variance of the prediction
residual under $Q$, given the fit. This conditional coverage
averages over a new target pair. Averaging also over the
target fitting sample gives the coverage and its gap from
the oracle,
\begin{equation}
 \begin{split}
 C_m(\lambda,v)
 &=\E_{\calD_{\rm tar}}\bigl[C(\widehat\mu,\widehat v)\bigr],\\
 \Delta_m
 &=\E_{\calD_{\rm tar}}\bigl[(1-\alpha)-C(\widehat\mu,\widehat v)\bigr].
 \end{split}
 \label{eq:scaleaverage}
\end{equation}
Both quantities are independent of $\mu$.
If the variance $v$ is known and only the mean $\mu$ is
estimated, use the superscript $\mathrm{loc}$ to denote
this \emph{location-only} estimation setting. The limiting
coverage, averaged over the target fitting sample, and its
gap from the oracle satisfy
\begin{equation}
 \begin{split}
 C_m^{\rm loc}
 &=2\Phi\left(\frac{z_\alpha}
 {\sqrt{1+\tau^2/(m\lambda^2v)}}\right)-1,\\
 \Delta_m^{\rm loc}
 &:=(1-\alpha)-C_m^{\rm loc}>0.
 \end{split}
 \label{eq:locationlimit}
\end{equation}
Thus mean estimation alone gives a strictly positive coverage
gap for every finite $m$, even as $n\to\infty$.
\end{proposition}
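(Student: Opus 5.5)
The proof has three parts: reduce the calibration step to a population quantile as $n\to\infty$, compute the Gaussian coverage given the fit, and then average over the target sample.

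\textbf{Large-calibration limit.} Condition on $\calA$, here the target fitting sample, which fixes $(\widehat\mu,\widehat v)$. The score $\widehat s$ and weight $\widehat w$ are then fixed. Calibration pairs are i.i.d.\ from $P$ with $Y\sim\N(0,1)$. Since $\widehat w(y)=\phi_{\widehat\mu,\widehat v}(y)/\phi_{0,1}(y)$, the weight satisfies $\E_P\widehat w=1$. Moreover $\E_P\widehat w^{2}<\infty$ exactly when $\widehat v<2$, so that condition cannot be assumed. I would therefore argue via the law of large numbers, which needs only $\E_P\widehat w<\infty$. The weighted empirical distribution function $\sum_i\widehat w_i\ind\{S_i\le s_0\}/\sum_i\widehat w_i$ converges almost surely to $\Prob_{R}\{\widehat s\le s_0\}$, where $R=R_{\widehat w}$ has $Y\sim\N(\widehat\mu,\widehat v)$ and $Z\mid Y\sim\N(\lambda Y,\tau^2)$.

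Under $R$, the fitted predictive distribution is exact. Hence $\widehat s$ is distributed as $|N(0,1)|$, with continuous and strictly increasing distribution function. It follows that $\widehat q_{\widehat w}\to z_\alpha$ almost surely, and the FWQ set converges to $I_\infty$. For exact ranks, the candidate weight's share $\widehat w(y)/(\widehat w(y)+\sum_i\widehat w_i)$ tends to $0$ for every fixed $y$. Dominated convergence over the test pair then gives the same limit. Bounded convergence turns almost-sure convergence of the indicator into convergence of the conditional coverage. The oracle is the special case $(\widehat\mu,\widehat v)=(\mu,v)$, where $R=Q$ and the coverage is $1-\alpha$.

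\textbf{Conditional coverage.} Under $Q$, write the residual as
\[
Y-\widehat\mu_{\rm pred}(Z)=(1-\lambda\widehat\gamma)(Y-\widehat\mu)-\widehat\gamma\varepsilon .
\]
Given the fit, $Y-\widehat\mu\sim\N(\mu-\widehat\mu,v)$ independently of $\varepsilon$. The residual is therefore Gaussian with mean $b$ and variance $\sigma_{\rm res}^2$, and \eqref{eq:conditionalcoverage} follows directly. Averaging over the target sample gives \eqref{eq:scaleaverage}.

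\textbf{Independence of $\mu$.} The sufficient statistics satisfy $\overline Z-\lambda\mu\sim\N(0,\sigma_Z^2/m)$, and $S_Z^2$ has a law free of $\mu$. Hence $\widehat\mu-\mu$ and $\widehat v$ have a joint law free of $\mu$. The conditional coverage depends on the fit only through $(\mu-\widehat\mu,\widehat v)$, so both $C_m$ and $\Delta_m$ are independent of $\mu$.

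\textbf{Location-only case.} With $\widehat v=v$ we have $\widehat\gamma=\gamma$ and $\widehat v_{\rm pred}=v_{\rm pred}$. A direct computation gives $\sigma_{\rm res}^2=v_{\rm pred}$ and $1-\lambda\gamma=\tau^2/(\lambda^2v+\tau^2)$. Rather than integrate $\Phi$ over $b$, I use that $\widehat\mu$ is independent of the new pair. Unconditionally, the residual is therefore $\N(0,\,v_{\rm pred}+(1-\lambda\gamma)^2\sigma_Z^2/(m\lambda^2))$, and coverage is $2\Phi(z_\alpha\sqrt{v_{\rm pred}}/\text{sd})-1$. The variance ratio is
\[
1+\frac{(1-\lambda\gamma)^2\sigma_Z^2}{m\lambda^2v_{\rm pred}}
=1+\frac{\tau^4}{\sigma_Z^2}\cdot\frac{\sigma_Z^2}{m\lambda^2v\tau^2}
=1+\frac{\tau^2}{m\lambda^2v}.
\]
This is \eqref{eq:locationlimit}. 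Because the ratio exceeds $1$ and $\Phi$ is strictly increasing, $\Delta_m^{\rm loc}>0$.

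\textbf{Main obstacle.} The hardest step is the first one. The possibly infinite weight variance and the candidate weight in exact ranks mean the quantile convergence must be argued carefully, via the strong law plus continuity of the limiting score distribution, rather than through a variance bound.
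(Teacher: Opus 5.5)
Your proposal is correct and follows essentially the same route as the paper's proof. Both use the weighted strong law and continuity of the limiting $|\N(0,1)|$ score law under $R_{\widehat w}$ to get the threshold $z_\alpha$, with a vanishing candidate weight for exact ranks, then the Gaussian residual decomposition for $C(\widehat\mu,\widehat v)$, the pivot argument for independence of $\mu$, and unconditional averaging over $\widehat\mu$ in the location-only case.

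Two small points are worth stating explicitly. Almost-sure convergence of the coverage indicator needs the boundary $\widehat s(Z_*,Y_*)=z_\alpha$ to be $Q$-null, which holds because the residual is Gaussian with $\sigma_{\rm res}^2>0$. And the joint law of $(\widehat\mu-\mu,\widehat v)$ is free of $\mu$ because both are functions of the centered readings $Z_j^{\rm t}-\lambda\mu$, or because they are independent; that each marginal law is free of $\mu$ is not enough on its own.
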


Weighted calibration with $\widehat w$ targets the fitted joint
distribution, under which $I_\infty$ has coverage $1-\alpha$.
The gap $\Delta_m$ compares this nominal coverage under the
fitted model with actual coverage under the true $Q$.
\Cref{app:exactproof} reduces the expectation in
\eqref{eq:scaleaverage} to a one-dimensional integral.
With both parameters estimated, clipping can give $\Delta_m<0$
at finite $m$ (\Cref{app:exactproof}). The next result establishes
positivity and compares the two gaps for sufficiently large $m$.

\paragraph{Coverage loss from estimating outcome parameters.}
Let $\kappa=\lambda^2v/\tau^2$ be the measurement signal-to-noise
ratio. Small $\kappa$ means that the readings contain little
information about the outcome. Estimating $v$ requires subtracting
the noise variance from the sample sensor variance and dividing
by $\lambda^2$, which amplifies errors when the loading is weak.
A second-order delta-method expansion
\citep{vanderVaartAW1998book} quantifies the resulting coverage loss.

\bigskip
\begin{proposition}[Leading coverage gaps from estimation]
\label{prop:scalecost}
Under the assumptions of \Cref{prop:scaleexact}, fix
$\kappa>0$, $0<\alpha<1$, and $v_{\min}<v<v_{\max}$.
After taking $n\to\infty$, let $m\to\infty$.
The mean-only and mean-and-variance coverage gaps satisfy
\begin{equation}
 \begin{split}
 \Delta_m^{\rm loc}
 &=\frac{z_\alpha\phi(z_\alpha)}{m\kappa}+o(m^{-1}),\\
 \Delta_m
 &=\frac{z_\alpha\phi(z_\alpha)}{m}
 \left\{\frac6\kappa+
              \frac{1+z_\alpha^2}{2\kappa^2}\right\}
 +o(m^{-1}).
 \end{split}
 \label{eq:scalecost}
\end{equation}
Consequently, $0<\Delta_m^{\rm loc}<\Delta_m$ for all sufficiently
large $m$, or equivalently $C_m<C_m^{\rm loc}<1-\alpha$.
\end{proposition}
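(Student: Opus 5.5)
The plan is a second-order delta-method expansion of the conditional coverage $C(\widehat\mu,\widehat v)$ from \Cref{prop:scaleexact} around the true parameters $(\mu,v)$, where $C(\mu,v)=1-\alpha$, followed by averaging over the target fitting sample.

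\emph{Location-only gap.} Here nothing needs to be expanded beyond a scalar function. Write $\varepsilon_m=\tau^2/(m\lambda^2v)=1/(m\kappa)$ in \eqref{eq:locationlimit}. A Taylor expansion gives
\[
 \Phi\bigl(z_\alpha(1+\varepsilon_m)^{-1/2}\bigr)=\Phi(z_\alpha)-\tfrac12 z_\alpha\phi(z_\alpha)\varepsilon_m+O(\varepsilon_m^2).
\]
Doubling this yields $\Delta_m^{\rm loc}=z_\alpha\phi(z_\alpha)/(m\kappa)+o(m^{-1})$. Positivity of $\Delta_m^{\rm loc}$ is already part of \Cref{prop:scaleexact}.

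\emph{Full gap: reduction to a smooth function of two errors.} Put $\delta_\mu=\widehat\mu-\mu$ and $\delta_v=\widehat v-v$. By \eqref{eq:conditionalcoverage}, $C$ depends on $\delta_\mu$ only through $b^2$, because the two $\Phi$ terms are symmetric in $b$. Hence
\[
 C=2\Phi\bigl(z_\alpha\sqrt{\widehat v_{\rm pred}}/\sigma_{\rm res}\bigr)-1-z_\alpha\phi(z_\alpha)\,b^2/v_{\rm pred}+\text{higher order}.
\]
The moment inputs are as follows.
\begin{itemize}
 \item Gaussianity makes $\overline Z$ and $S_Z^2$ independent, with $\E\delta_\mu^2=\sigma_Z^2/(m\lambda^2)$.
 \item Since $mS_Z^2/\sigma_Z^2\sim\chi^2_{m-1}$, on the unclipped event $\E\delta_v=-\sigma_Z^2/(m\lambda^2)$ and $\E\delta_v^2=2\sigma_Z^4/(m\lambda^4)+O(m^{-2})$.
 \item With $\sigma_Z^2/\lambda^2=v(1+1/\kappa)$, every term can be written as $v$ or $v^2$ times a polynomial in $1/\kappa$.
 \item The cross term $\E[\delta_\mu\delta_v]$ vanishes by independence.
\end{itemize}
The remaining step is to expand $g(\widehat v)=\sqrt{\widehat v_{\rm pred}}/\sigma_{\rm res}$, with $\gamma$ and $\widehat\gamma$ recomputed from $\widehat v$, to second order in $\delta_v$. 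At $\delta_v=0$ one has $g=1$ and $\sigma_{\rm res}^2=v_{\rm pred}$. Then compute
\[
 \E\bigl[2\Phi(z_\alpha g)-1\bigr]-(1-\alpha)\approx 2z_\alpha\phi(z_\alpha)\bigl\{g'\,\E\delta_v+\tfrac12 g''\,\E\delta_v^2\bigr\}-z_\alpha^3\phi(z_\alpha)\,(g')^2\,\E\delta_v^2 .
\]
The $-z_\alpha^3\phi(z_\alpha)$ piece comes from $\phi'(z)=-z\phi(z)$; this is the source of the $z_\alpha^2$ in the final constant. To this one adds the $b^2$ contribution, with $(1-\lambda\gamma)^2=v_{\rm pred}^2/v^2$.

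\emph{Bookkeeping and the main obstacle.} Collecting terms should give $m^{-1}z_\alpha\phi(z_\alpha)\{6/\kappa+(1+z_\alpha^2)/(2\kappa^2)\}$. The main obstacle is exactly this bookkeeping: getting the correct first and second derivatives of $g$ in $v$. Here $\widehat v$ enters $\widehat v_{\rm pred}$, $\widehat\gamma$, and $\sigma_{\rm res}$ simultaneously, and the first-order bias of $S_Z^2$ contributes at the same order as the variance term. I would reparametrize by $u=\lambda^2\widehat v/\tau^2$, so that everything is a rational function of $u$. I would then verify the constant numerically against the one-dimensional integral of \Cref{app:exactproof}.

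\emph{Rigor of the expansion.} The clipping to $[v_{\min},v_{\max}]$ is handled as follows. Since $v$ is interior, the clipped event has probability $e^{-cm}$ by chi-square concentration. Because $C\in[0,1]$, that event contributes $o(m^{-1})$. On the complementary event, the third-order Taylor remainder is bounded by $C'(|\delta_\mu|^3+|\delta_v|^3)$ with $O(m^{-3/2})$ expectations, using smoothness of $C$ on a compact neighborhood. Finally, since $6/\kappa+(1+z_\alpha^2)/(2\kappa^2)>1/\kappa$, the two leading constants give $\Delta_m^{\rm loc}<\Delta_m$, and together with $\Delta_m^{\rm loc}>0$ this yields $C_m<C_m^{\rm loc}<1-\alpha$ for large $m$.
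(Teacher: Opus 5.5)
Your proposal is correct and follows essentially the paper's route: a second-order delta-method expansion around the truth, with chi-square concentration disposing of clipping. The only difference is that the paper first integrates out $\widehat\mu$ exactly, which turns your $b^2$ term into the deterministic $e=(1+\kappa)/(m\kappa)$ in $G_\kappa(\rho,e)$, and then expands in $\rho=\widehat v/v$ alone; the bookkeeping you left open does close, with the variance-estimator bias, the mean error, and the curvature contributing $1/\kappa$, $1/\kappa$, and $4/\kappa+(1+z_\alpha^2)/(2\kappa^2)$ respectively, each times $z_\alpha\phi(z_\alpha)/m$.
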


The mean-only gap $\Delta_m^{\rm loc}$ has leading order
$1/(m\kappa)$. The difference $\Delta_m-\Delta_m^{\rm loc}$
measures the additional coverage loss from estimating the variance.
Subtracting the expansions gives a positive leading term,
including a contribution of order $1/(m\kappa^2)$. Weak measurements therefore make variance
estimation especially costly. The expansion holds at fixed
$\kappa$ with the true variance inside the bounds. When fitting
errors are large or clipping is frequent, use the exact expression
in \Cref{app:exactproof}.

For example, with $\alpha=0.10$, $\lambda=0.25$,
$v=\tau^2=1$, $m=2000$, and bounds $[0.05,10]$, the oracle
has coverage $0.90$, the mean-only fit has coverage $0.8986$,
and the mean-and-variance fit has coverage $0.8386$.
Thus $\Delta_m^{\rm loc}\approx0.0014$ and
$\Delta_m\approx0.0614$. The first-order expansion instead gives
coverage $0.8516$, understating the gap by about $21\%$. It is
not quantitatively accurate at this measurement strength and sample size.
The exact large-calibration values show that additional source
calibration data cannot remove these gaps.
At $\mu=0$, source and target distributions coincide, showing
that estimating an unnecessary correction can also be costly.
\Cref{sec:calibration} addresses finite calibration samples
and estimated source mechanisms. The present comparison isolates
target-fitting error with known source mechanisms.

\subsection{Shared versus separate corrections across contexts}
\label{subsec:contextcost}

Context is observed, but target outcomes are unlabeled, so the
outcome change within each context must be estimated from sensor
readings. We now consider the optional restriction that these
changes can be represented by one common correction. Measurement
stability alone does not impose this restriction. Both fitting
choices below retain context for prediction.

\paragraph{Two fitting choices.}
Write the source and target outcome means as
$\mu_c^P$ and $\mu_c^P+\delta_c$, respectively.
A separate fit estimates each $\delta_c$ using target readings
from that context. A shared fit assumes $\delta_c=\delta$ and
estimates one change using readings across contexts. For example,
a common increase of three units changes source means $10$ and
$20$ in cold and warm conditions to $13$ and $23$. Sharing the
change still allows different predictions in the two contexts.
Sharing can reduce fitting noise when target data are limited,
but introduces bias when the true changes differ.

\paragraph{Coverage trade-off.}
To isolate this choice, take the source mechanisms as known and
the outcome variances as known and unchanged. Let $n\to\infty$
before $m\to\infty$. For a direct comparison, fix $K$ and suppose
$v_c=v$ and $\kappa_c=\lambda_c^2v/\tau_c^2=\kappa>0$ in every
context, with target counts $m_c/m\to\pi_c>0$.
Consider $\delta_c=\delta_0+\xi_c/\sqrt m$, where
$\sum_c\pi_c\xi_c=0$. Differences between changes are then of
the same order as estimation error. Let $\Delta_m^{\rm separate}$
and $\Delta_m^{\rm shared}$ denote $1-\alpha$ minus the respective
limiting target coverages, averaged over the target fitting sample.
\Cref{cor:contextcost,prop:sharedcorrection} give
\begin{equation}
 \begin{split}
 \Delta_m^{\rm separate}
 &=\frac{z_\alpha\phi(z_\alpha)K}{m\kappa}+o(m^{-1}),\\
 \Delta_m^{\rm shared}
 &=z_\alpha\phi(z_\alpha)
   \left\{\frac1{m\kappa}
     +\frac{\Var_Q(\delta_C)}{v(1+\kappa)}\right\}
   +o(m^{-1}).
 \end{split}
 \label{eq:poolingsummary}
\end{equation}
Separate fitting incurs the estimation cost of $K$ changes.
Sharing reduces that cost to one change, while adding a penalty
for differences between the true changes. Both limiting intervals
have width $2z_\alpha\sqrt{v/(1+\kappa)}$, so their coverage difference
comes from the fitted centers. To first order, separate fitting
has the smaller gap when
\[
 \Var_Q(\delta_C)
 >\frac{v(1+\kappa)(K-1)}{m\kappa}.
\]
Thus, observing context does not by itself justify fitting a
different correction in every context. Similar changes favor
sharing, whereas sufficiently different changes favor separate fitting.
The expansions also hold after averaging over random context counts.
The comparison uses unknown changes and does not guarantee
coverage for a data-selected fit.
\Cref{app:contextcostproof,app:sharedproof} give the general
formulas and proofs. \Cref{sec:poolingexperiment} illustrates
the comparison.

\section{Experiments}
\label{sec:experiments}
We examine prediction benefits, the coverage cost of estimating
corrections, and performance on real measurements. Nominal coverage
is $0.90$, and weighted procedures use FWQ unless stated otherwise.
Target outcomes are excluded from adaptation except for the
target-labeled reference. Gaussian experiments use independent
samples. Air Quality results summarize day splits of one dependent
time series and do not establish a time-series coverage guarantee.

\paragraph{Methods compared.}
The suffix S denotes a score based on the source predictive
distribution, and T denotes one based on the target-adjusted
predictive distribution.
\textbf{Standard CP} uses source predictions and unweighted
split-conformal calibration on source data.
\textbf{Covariate-S} keeps the source predictions and weights
source calibration by an estimated target-to-source input density ratio.
SMM denotes stable measurement mechanisms.
\textbf{SMM-S} and \textbf{SMM-T} retain context and use the
estimated joint ratio in \eqref{eq:fittedproxyweight} under measurement
stability. For a given fitted correction, they use the same weights.
SMM-S adjusts calibration, while SMM-T also adjusts the predictive
distribution used for the score.

\subsection{Prediction benefit of measurement stability and context}
\label{sec:restrictionbenefit}
The source has $P(C=1)=1/2$ and
\[
 Y\mid C=c\sim\N(0,v_c),\qquad (v_0,v_1)=(0.5,2),\qquad
 Z=\lambda Y+\varepsilon,\quad\varepsilon\sim\N(0,1).
\]
The target changes to $Q(C=1)=3/4$ and
$Y\mid C=c\sim\N(2c-1,v_c)$ while preserving the measurement mechanism.
Each of $200$ paired replications uses $N=2000$ source training
pairs, $m=2000$ target inputs, and $n=6000$ calibration pairs.
The source mechanisms are estimated. All methods here use
unstandardized absolute residuals, so the different source
variances let covariate weighting change the threshold.
For the SMM methods, \textbf{shared} fits one common outcome
mean change across contexts, whereas \textbf{separate} fits a change
within each context. Both retain $C$ for prediction. Comparing
SMM-S and SMM-T with separate corrections isolates predictive adjustment.

\textbf{Pooled label-T} discards $C$ and assumes stability of
$Z\mid Y$. It fits a two-component target outcome mixture from $Z$
and weights source calibration by the estimated target-to-source
outcome density ratio. Its adjusted predictor uses $Z$ alone.
This family contains the true target distribution of $(Z,Y)$, so
the comparison isolates the value of retaining context.
Label-shift adjustment using $(C,Z)$ would instead require stability
of $(C,Z)\mid Y$. \textbf{Oracle ratio and mean} uses the true joint
ratio and target conditional mean, with finite source calibration
and the same FWQ rule. \Cref{app:gaussianprotocol} gives fitting details.

\begin{table}[htbp]\centering\small
\caption{Gaussian comparison at $\lambda=1$, with $200$ paired replications, $N=2000$, $m=2000$, $n=6000$, and nominal coverage $0.90$. All scores are unstandardized absolute residuals. Parentheses are Monte Carlo SEs. }
\label{tab:benefitmain}
\begin{tabular}{@{}lrr@{}}\toprule Method & Coverage & Mean length\\\midrule
Standard CP & 0.8094 (0.0006) & 2.327 (0.002)\\
Covariate-S & 0.8461 (0.0007) & 2.521 (0.003)\\
Pooled label-T & 0.9003 (0.0005) & 2.754 (0.004)\\
SMM-T, shared & 0.8391 (0.0006) & 2.522 (0.003)\\
SMM-S, separate & 0.9016 (0.0008) & 2.896 (0.007)\\
SMM-T, separate & 0.8996 (0.0006) & 2.521 (0.005)\\
Oracle ratio and mean & 0.9004 (0.0005) & 2.520 (0.004)\\
\bottomrule\end{tabular}\end{table}

In \Cref{tab:benefitmain}, SMM-T with separate corrections reaches
coverage $0.8996$, compared with $0.8461$ for Covariate-S at the
same mean length, $2.521$. Its intervals are about $8\%$ shorter
than Pooled label-T at comparable coverage. Predictive adjustment
also shortens intervals relative to SMM-S with the same weights.
A shared correction fails because the contexts change in opposite
directions.

Population residual quantiles give lengths $2.5150$ with context
and $2.7408$ without it, an $8.2\%$ reduction. This illustrates the
value of context, but the unequal target predictive variances
$1/3$ and $2/3$ fall outside the common-variance setting of
\Cref{prop:contextvalue}. We do not claim that these fixed-width
intervals attain its optimality bound.

As a non-Gaussian check, let the source outcome be $\N(0,1)$
in context zero and an equally weighted mixture
$\N(-1.5,0.09)$, $\N(1.5,0.09)$ in context one.
Use $Q(C=1)=0.2$, conditional tilt coefficients $(0.5,-0.5)$,
and $\lambda=0.5$. Covariate-S and SMM-T use standardized residuals, with SMM-T
fitting a separate outcome tilt per context. Covariate-S obtains
its input ratio from the fitted outcome and sensor models
(\Cref{app:gaussianprotocol}). Sample sizes are unchanged, with
an estimated shared Gaussian sensor model and $150$ paired fits.
Covariate-S gives coverage $0.8714$ (Monte Carlo SE $0.0006$) and
length $3.070$ (SE $0.004$), compared with $0.8982$ (SE $0.0006$)
and $2.902$ (SE $0.006$) for SMM-T. The source shapes,
as well as frequencies, now affect calibration.

\subsection{Sharing corrections and finite calibration}
\label{sec:poolingexperiment}\label{sec:finitecalibration}
To compare shared and separate corrections, use $K=16$ contexts,
source $Y\mid C=c\sim\N(0,1)$, and
$Z=0.5Y+\varepsilon$ with standard normal noise.
Target probabilities satisfy $\pi_c\propto0.8^c$, and outcome
means are $\delta_c=0.5+\eta u_c$, where
$\sum_c\pi_cu_c=0$ and $\sum_c\pi_cu_c^2=1$.
Thus $\eta=\operatorname{sd}_Q(\delta_C)$ measures differences between the changes.
We use $m=2000$, $n=6000$, and $200$ replications per $\eta$.
\Cref{fig:poolingcomparison} compares SMM-T with \emph{Shared}
and \emph{Separate} mean corrections. Both retain context and use
adjusted standardized residuals with FWQ. Source mechanisms are
known in panel (a) and estimated from $N=8000$ pairs in panel (b).
Both panels fit outcome changes from unlabeled target data.

\begin{figure}[tbp]
\centering
\begin{subfigure}[t]{.49\textwidth}\centering
\begin{tikzpicture}
\begin{axis}[width=0.97\linewidth,height=5.6cm,xmin=-0.01,xmax=0.43,ymin=0.872,ymax=0.902,
xlabel={Heterogeneity $\mathrm{sd}_Q(\delta_C)$},ylabel={Marginal coverage},yticklabel style={/pgf/number format/fixed,/pgf/number format/precision=3},
tick label style={font=\scriptsize},label style={font=\small},grid=major,grid style={gray!20},title={Known source},title style={font=\small},
legend style={font=\scriptsize,at={(0.03,0.03)},anchor=south west,draw=none,fill=white,fill opacity=0.85,text opacity=1}]
\addplot[gray,dotted,forget plot] coordinates {(-0.01,.9) (0.43,.9)};
\addplot[blue!65!black,thin] coordinates {(0.000,0.8946) (0.010,0.8946) (0.020,0.8946) (0.030,0.8946) (0.040,0.8946) (0.050,0.8946) (0.060,0.8946) (0.070,0.8946) (0.080,0.8946) (0.090,0.8946) (0.100,0.8946) (0.110,0.8946) (0.120,0.8946) (0.130,0.8946) (0.140,0.8946) (0.150,0.8946) (0.160,0.8946) (0.170,0.8946) (0.180,0.8946) (0.190,0.8946) (0.200,0.8946) (0.210,0.8946) (0.220,0.8946) (0.230,0.8946) (0.240,0.8946) (0.250,0.8946) (0.260,0.8946) (0.270,0.8946) (0.280,0.8946) (0.290,0.8946) (0.300,0.8946) (0.310,0.8946) (0.320,0.8946) (0.330,0.8946) (0.340,0.8946) (0.350,0.8946) (0.360,0.8946) (0.370,0.8946) (0.380,0.8946) (0.390,0.8946) (0.400,0.8946) (0.410,0.8946) (0.420,0.8946) (0.430,0.8946) (0.440,0.8946) (0.450,0.8946)};
\addlegendentry{Separate population limit}
\addplot[red!70!black,thin,dashed] coordinates {(0.000,0.8997) (0.010,0.8996) (0.020,0.8996) (0.030,0.8995) (0.040,0.8994) (0.050,0.8993) (0.060,0.8992) (0.070,0.8990) (0.080,0.8988) (0.090,0.8986) (0.100,0.8983) (0.110,0.8980) (0.120,0.8977) (0.130,0.8974) (0.140,0.8970) (0.150,0.8966) (0.160,0.8962) (0.170,0.8957) (0.180,0.8953) (0.190,0.8948) (0.200,0.8943) (0.210,0.8937) (0.220,0.8931) (0.230,0.8925) (0.240,0.8919) (0.250,0.8912) (0.260,0.8905) (0.270,0.8898) (0.280,0.8891) (0.290,0.8883) (0.300,0.8876) (0.310,0.8867) (0.320,0.8859) (0.330,0.8850) (0.340,0.8842) (0.350,0.8832) (0.360,0.8823) (0.370,0.8813) (0.380,0.8804) (0.390,0.8793) (0.400,0.8783) (0.410,0.8773) (0.420,0.8762) (0.430,0.8751) (0.440,0.8739) (0.450,0.8728)};
\addlegendentry{Shared population limit}
\addplot[blue!65!black,only marks,mark=*,mark size=1.6pt] coordinates {(0.000,0.8938) (0.050,0.8944) (0.100,0.8947) (0.200,0.8942) (0.400,0.8942)};
\addlegendentry{Separate}
\addplot[red!70!black,only marks,mark=square*,mark size=1.6pt] coordinates {(0.000,0.8993) (0.050,0.8996) (0.100,0.8986) (0.200,0.8937) (0.400,0.8785)};
\addlegendentry{Shared}
\end{axis}
\end{tikzpicture}
\caption{}
\end{subfigure}\hfill
\begin{subfigure}[t]{.49\textwidth}\centering
\begin{tikzpicture}
\begin{axis}[width=0.97\linewidth,height=5.6cm,xmin=-0.01,xmax=0.43,ymin=0.872,ymax=0.902,
xlabel={Heterogeneity $\mathrm{sd}_Q(\delta_C)$},ylabel={Marginal coverage},yticklabel style={/pgf/number format/fixed,/pgf/number format/precision=3},
tick label style={font=\scriptsize},label style={font=\small},grid=major,grid style={gray!20},title={Estimated source ($N=8000$)},title style={font=\small},
legend style={font=\scriptsize,at={(0.03,0.03)},anchor=south west,draw=none,fill=white,fill opacity=0.85,text opacity=1}]
\addplot[gray,dotted,forget plot] coordinates {(-0.01,.9) (0.43,.9)};
\addplot[blue!65!black,only marks,mark=*,mark size=1.6pt] coordinates {(0.000,0.8914) (0.050,0.8920) (0.100,0.8920) (0.200,0.8927) (0.400,0.8932)};
\addlegendentry{Separate}
\addplot[red!70!black,only marks,mark=square*,mark size=1.6pt] coordinates {(0.000,0.8987) (0.050,0.8986) (0.100,0.8980) (0.200,0.8945) (0.400,0.8790)};
\addlegendentry{Shared}
\end{axis}
\end{tikzpicture}
\caption{}
\end{subfigure}
\caption{SMM-T with shared and separate location corrections, $K=16$ contexts, $v=1$, $\lambda=0.5$ ($\kappa=0.25$), $m=2000$ target inputs and unequal context frequencies $Q_C(c)\propto0.8^{c}$. Points are means over $200$ replications with $n=6000$. Lines in (a) evaluate the large-calibration formulas \eqref{eq:contextlocationcost} and \eqref{eq:sharedcoverage} at the expected counts. \eqref{eq:poolingtradeoff} gives the first-order crossing $\mathrm{sd}_Q(\delta_C)=\{v(K-1)(1+\kappa)/(m\kappa)\}^{1/2}=0.194$.}
\label{fig:poolingcomparison}
\end{figure}

Sharing gives higher coverage when changes are similar, while
separate fitting avoids the bias from sharing when they differ.
At expected counts, the curves cross at $\eta=0.1931$, close to
the first-order value $0.194$. With estimated source mechanisms,
separate-fit coverage rises slightly from $0.8914$ to $0.8932$.
The known-source analysis does not explain this trend, and without
Monte Carlo SEs we do not interpret it as systematic.

Increasing calibration size does not remove target-fitting error.
We apply FWQ to the Gaussian mean-and-variance correction in
\Cref{subsec:scalarlearning}, with adjusted standardized residuals
and known source mechanisms. With $\lambda=0.25$, $m=2000$, and
true outcome variance one, increasing $n$ from $1500$ to $96000$
moves coverage from $0.8359$ to $0.8384$, toward its
large-calibration limit $0.8386$, rather than toward $0.90$.
The corresponding values at variance $1.5$ are $0.8636$,
$0.8708$, and $0.8713$.
\Cref{app:fittingprotocol} gives the finite-calibration protocol.

\subsection{Air Quality under controlled and calendar-based shifts}
\label{sec:air}

We use $7344$ complete hourly observations on $347$ days from
the UCI Air Quality dataset \citep{DeVitoS2008sab}, whose documentation
reports concept and sensor drift \citep{DeVitoS2008uci}.
The outcome is log carbon monoxide (CO), the measurements are five
log sensor readings, and four contexts split temperature at
$20^\circ$C and relative humidity at $50\%$. The source model has
context-specific outcome location and scale, a shared residual
mixture, and a quadratic Gaussian sensor model. Each design uses
$50$ day splits, with length measured in log CO.
\Cref{app:airprotocol} gives fitting and evaluation details.

Calendar-based source--target splits use different day types or
observation periods for the two populations. The source-to-target
directions are weekdays to weekends, March--May to
September--November 2004 (spring to autumn), and March 10--April 30,
2004 to March 1--April 4, 2005 (year over year). The year-over-year
comparison thus uses unequal date ranges.
The source group supplies labeled training and calibration data.
The target group supplies unlabeled fitting data and separate days
for evaluation. Unlike the controlled selection below, these splits
do not enforce measurement stability.

We compare Standard CP, Covariate-S, and SMM-T using
standardized residuals.
For these calendar-based splits, the \textbf{Target-labeled reference}
keeps the source predictor but uses unweighted calibration on
outcomes from target fitting days. It therefore uses labels
unavailable to the adaptation methods.

SMM-T fits one bounded exponential tilt per context using
log CO clipped to $[\log0.3,\log8]$. The main controlled selection
uses $\ell(y)=1.6\tanh\{(y-\log2)/1.6\}$, with target-to-source
density ratio proportional to
\[
 \exp\{0.25s_T+1.2s_H\ell(y)\},
\]
where $s_T,s_H\in\{-1,1\}$ indicate warm and humid conditions.
Selection on outcome and context preserves the measurement
mechanism, although the different selection statistic places the
true correction outside the fitted tilt family. No-shift and
context-only designs provide reference cases.

\begin{table}[ht!]\centering\small\setlength{\tabcolsep}{7pt}
\caption{Air Quality prediction with one source-model family and
$50$ day splits per design. Entries are coverage / mean length in
log CO. Nominal coverage is $0.90$. Scores are standardized
residuals, and weighted procedures use FWQ. Only the target-labeled
reference uses outcomes from target fitting days.}
\label{tab:airmain}
\textbf{(a) Controlled shifts}\par\smallskip
\begin{tabular}{@{}lrrr@{}}\toprule
Shift & Standard CP & Covariate-S & SMM-T\\\midrule
No shift & 0.894 / 0.77 & 0.895 / 0.77 & 0.894 / 0.77\\
Context only & 0.875 / 0.75 & 0.893 / 0.81 & 0.896 / 0.83\\
Context and outcome & 0.861 / 0.68 & 0.885 / 0.75 & 0.901 / 0.85\\
\bottomrule\end{tabular}
\par\medskip
\textbf{(b) Calendar-based source--target splits}\par\smallskip
\begin{tabular}{@{}lrrr@{}}\toprule
Method & Weekday--weekend & Spring--autumn & Year over year\\\midrule
Standard CP & 0.867 / 0.77 & 0.751 / 0.54 & 0.437 / 0.37\\
Covariate-S & 0.887 / 0.85 & 0.733 / 0.51 & 0.451 / 0.39\\
SMM-T & 0.894 / 0.93 & 0.708 / 0.50 & 0.491 / 0.48\\
Target-labeled reference & 0.892 / 0.88 & 0.899 / 0.86 & 0.886 / 1.29\\
\bottomrule\end{tabular}
\end{table}

Under controlled context-and-outcome shift (\Cref{tab:airmain}),
SMM-T raises coverage from $0.885$ for Covariate-S to $0.901$
at a length cost, from $0.75$ to $0.85$. The paired coverage
difference is $0.0163$ (split Monte Carlo SE $0.0036$).

For weekday-to-weekend transfer, SMM-T reaches coverage $0.894$
versus $0.887$ for Covariate-S, but its intervals are longer than
the target-labeled reference's ($0.93$ versus $0.88$).
For spring-to-autumn transfer, adaptation worsens coverage to
$0.708$, below Standard CP ($0.751$) and Covariate-S ($0.733$),
with slightly shorter intervals ($0.50$ versus $0.54$ and $0.51$).
Year-over-year coverage is only $0.491$. The target-labeled reference
reaches $0.899$ and $0.886$ in the latter two transfers, using
labels unavailable to the adaptation methods.

Context discretization is itself a possible source of instability.
If $W$ denotes continuous weather and $C=h(W)$ its four-bin
summary, stability given $(Y,W)$ yields
\[
 q(z\mid y,c)=\int k(z\mid y,w)\,q(\dd w\mid y,c).
\]
Changes in weather composition within a bin can therefore change
$Z\mid Y,C$ even with unchanged physical sensor responses.
The failures do not distinguish this effect from sensor changes,
model misspecification, temporal dependence, or limited overlap.

\section{Conclusion}
\label{sec:discussion}

The stable measurement model gives an application-based
interpretation of sparse joint shift. Our analysis separates the
predictive benefit of context from the coverage cost of estimating
outcome changes. With known source mechanisms, Gaussian formulas
quantify coverage gaps from a finite unlabeled target sample even
as calibration size tends to infinity. Weak measurements amplify
the coverage cost of variance estimation. Sharing a correction
reduces fitting error when changes are similar, while sufficiently
different changes favor separate corrections.

In the synthetic comparisons, SMM-T improves coverage over covariate
weighting and shortens intervals relative to a correctly specified
pooled label correction that omits context. Controlled Air Quality
shifts yield coverage gains with longer intervals. Calendar-based
shifts reveal substantial undercoverage, with adaptation worsening
coverage on the spring-to-autumn transfer. Stable physical sensors
need not satisfy the restriction conditional on a noisy outcome or
coarsened context. Matching target inputs cannot verify that restriction.

Exact weighted ranks guarantee finite-sample marginal coverage with
the true ratio under the stated sampling assumptions. FWQ introduces
quantile-approximation error, and fitted ratios introduce
distributional error. Extending the coverage analysis to estimated
measurement mechanisms, broader outcome families, and dependent
observations remains future work.

\FloatBarrier
\begingroup
\bibliographystyle{plainnat}
\bibliography{sjc}
\endgroup

\clearpage
\appendix
\section{Proofs and supplementary calculations}
\label{app:proofs}

The proofs follow the order of the results discussed in
\Cref{sec:gaussian}. We also give the exact Gaussian evaluation
formulas and the full statements of the context-fitting
results summarized in the main text. The identification lemma is proved immediately after its statement in the main text.

\subsection{Proof of Lemma~\ref*{prop:fwq} (finite-quantile discretization)}
\label{app:fwqproof}
\begin{proof}
Condition on $\calA$ and on the unordered collection of all
$n+1$ observations. By weighted exchangeability
\citep{TibshiraniR2019neurips}, the probability that observation
$k$ is the target draw is
$p_k=w_k/\sum_{j=1}^{n+1}w_j$.
We bound the total probability of indices that the FWQ rule
fails to cover when treated as the target observation.

If deleting index $k$ leaves zero calibration weight, the
procedure returns $\calY$ and covers its outcome. Otherwise,
failure means that $S_k$ exceeds the weighted calibration
quantile. Consequently,
\[
 \sum_{j\ne k:S_j\ge S_k}w_j
 \le\alpha\sum_{j\ne k}w_j,
 \qquad
 \sum_{j\ne k:S_j\ge S_k}p_j\le\alpha.
\]
Let $\mathcal K$ contain the indices satisfying the second
inequality. Every failure index belongs to $\mathcal K$.
If this set is nonempty, choose $k^*\in\mathcal K$ with the
smallest score. Every other member of $\mathcal K$ has score
at least $S_{k^*}$, including ties, so
\[
 \sum_{k\in\mathcal K}p_k
 \le\alpha+p_{k^*}
 \le\alpha+\max_kp_k.
\]
If $\mathcal K$ is empty, miscoverage is zero. Averaging over
the unordered observations proves \eqref{eq:fwqcoverage}.
\end{proof}

\subsection{Proof of Corollary~\ref*{cor:fwqfitted} (fitted finite quantiles)}
\label{app:fittedfwqproof}
\begin{proof}
Condition on $\calA$, so the score and fitted weight function
are fixed. Under the distribution $R_{\widehat w}$ in
\eqref{eq:effective}, the normalized weight
$\overline w=\widehat w/\E_P\widehat w$ is the true ratio
relative to $P$. Multiplying all weights by the same constant
changes neither the FWQ set nor the largest normalized weight.
\Cref{prop:fwq} therefore gives
\[
 \Prob_{P^n\otimes R_{\widehat w}}
 \{Y_*\in\calC_{\rm fwq}(X_*)\mid\calA\}
 \ge1-\alpha-
 \E_{P^n\otimes R_{\widehat w}}
 [\widehat\pi_{\max}\mid\calA].
\]
The calibration distribution is the same under both product
distributions, and hence
\[
 \TV(P^n\otimes Q,P^n\otimes R_{\widehat w})
 =\TV(Q,R_{\widehat w})
 =\tfrac12\E_P|w-\overline w|.
\]
Changing the test distribution from $R_{\widehat w}$ to $Q$
can decrease the coverage probability by at most this amount.
This is the estimated-weight transfer argument of
\citet{LeiL2021jrsssb}, applied to the joint prediction pair,
and it proves \eqref{eq:fittedfwq}. Any score or model-family
choice based only on $\calA$ remains fixed throughout the argument.
\end{proof}

\subsection{Proof of Proposition~\ref*{prop:contextvalue} (context and minimum interval length)}
\label{app:contextproof}
\begin{proof}
At a fixed input, a normal distribution assigns the greatest
probability to a fixed-length interval centered at its mean.
For length $\ell\ge0$, that probability is
$f(\ell)=2\Phi\{\ell/(2\sigma)\}-1$, the same function at
every input. It is increasing and strictly concave on
$[0,\infty)$. Writing $L=|I(C,Z)|$, marginal coverage and
Jensen's inequality give
\[
 1-\alpha\le\Prob_Q\{Y\in I(C,Z)\}
 \le\E_Q f(L)\le f(\E_Q L).
\]
Since $f(2\sigma z_\alpha)=1-\alpha$, this proves the bound.
Intervals centered at $\mu_{\rm pred}(C,Z)$ with constant
length $2\sigma z_\alpha$ attain it.

For a $Z$-only interval of nonconstant length, Jensen's inequality
is strict. If its length is constant, positive required coverage
makes that length positive. The condition
$\Var_Q\{\mu_{\rm pred}(C,Z)\mid Z\}>0$ on a set of positive
probability prevents its $Z$-measurable midpoint from agreeing with
the conditional mean almost surely. Optimal centering is then
strict after averaging. In either case
$1-\alpha<f(\E_Q L)$, proving strictly greater expected length.
\end{proof}

\subsection{Proof of Proposition~\ref*{prop:scaleexact} (exact coverage and oracle gap)}
\label{app:exactproof}

The expectation in \eqref{eq:scaleaverage} reduces to a
one-dimensional integral. Let
$a_m^2=\sigma_Z^2/(m\lambda^2)$ and $\nu=m-1$.
For a generic fitted variance $v_0>0$, define
\begin{equation}
 \begin{split}
 T(x)&=\left[\frac{\sigma_Z^2x/m-\tau^2}{\lambda^2}
       \right]_{[v_{\min},v_{\max}]},\\
 \mathcal H_v(v_0)
 &=2\Phi\left(z_\alpha\sqrt{
 \frac{v_0(\lambda^2v_0+\tau^2)}
      {\tau^2(v+a_m^2)+\lambda^2v_0^2}}
 \right)-1.
 \end{split}
 \label{eq:Hscale}
\end{equation}
Then the averaged coverage and its gap from the oracle are
\begin{equation}
 \begin{split}
 C_m(\lambda,v)
 &=\E_{U\sim\chi^2_\nu}[\mathcal H_v\{T(U)\}],\\
 \Delta_m
 &=\E_{U\sim\chi^2_\nu}[(1-\alpha)-\mathcal H_v\{T(U)\}].
 \end{split}
 \label{eq:scaleexact}
\end{equation}
The function $T$ includes the fixed variance constraints.

\begin{proof}
\emph{Calibration under the fitted model.}
Condition on $(\widehat\mu,\widehat v)$. The shared measurement
density gives $\widehat w(y)p_Y(y)p(z\mid y)
=\widehat q_Y(y)q(z\mid y)$. Thus $\E_P\widehat w=1$, and
weighting the source gives the fitted target distribution.
Its standardized predictive residual is $\N(0,1)$, so the
absolute-score distribution function is
$G(t)=2\Phi(t)-1$ for $t\ge0$. The weighted strong law and
continuity of $G$ give convergence of the FWQ threshold to
$z_\alpha$.
For each fixed candidate $(z,y)$, its normalized weight tends
to zero, and the exact weighted upper-tail probability in
\eqref{eq:wpvalue} converges to $1-G\{\widehat s(z,y)\}$.
The boundary $\widehat s(Z,Y)=z_\alpha$ has zero probability
under $Q$. Bounded convergence over the calibration sample
and new target pair therefore gives the same limiting
coverage for both rules, namely the true target coverage
of $I_\infty$.

\emph{Coverage under the true target model.}
Substituting $Z=\lambda Y+\varepsilon$ into the predictive
mean gives the residual
\[
 R=Y-\widehat\mu_{\rm pred}(Z)
  =(1-\lambda\widehat\gamma)(Y-\widehat\mu)
   -\widehat\gamma\varepsilon.
\]
Under $Q$, $Y\sim\N(\mu,v)$ and
$\varepsilon\sim\N(0,\tau^2)$ are independent. Hence
$R\mid\widehat\mu,\widehat v\sim\N(b,\sigma_{\rm res}^2)$, with $b$ and
$\sigma_{\rm res}^2$ in \eqref{eq:conditionalcoverage}. Evaluating
$\Prob_Q\{|R|\le z_\alpha\sqrt{\widehat v_{\rm pred}}
\mid\widehat\mu,\widehat v\}$ gives that equation.
For the oracle, setting the fitted parameters to their true
values gives $b=0$ and $\sigma_{\rm res}^2=v_{\rm pred}$, hence coverage
$1-\alpha$. Coverage is bounded by one, so bounded convergence
also justifies averaging over the fitting sample, as in
\eqref{eq:scaleaverage}.

\emph{Averaging over the fitted parameters.}
The normal sampling identities \citep{CochranW1934mpcps} give
\[
 \widehat\mu-\mu\sim\N(0,a_m^2),\qquad
 U=mS_Z^2/\sigma_Z^2\sim\chi^2_\nu,
\]
with independent terms, and $\widehat v=T(U)$. Conditional
on $\widehat v$, averaging over $\widehat\mu$ gives a normal
residual with mean zero and variance
$\sigma_{\rm res}^2+(1-\lambda\widehat\gamma)^2a_m^2$. Consequently,
\[
 \frac{\widehat v_{\rm pred}}{\sigma_{\rm res}^2+(1-\lambda\widehat\gamma)^2a_m^2}
 =\frac{\widehat v(\lambda^2\widehat v+\tau^2)}
 {\tau^2(v+a_m^2)+\lambda^2\widehat v^2}.
\]
This gives $\mathcal H_v(\widehat v)$ in \eqref{eq:Hscale}.
Averaging over $U$ gives $C_m$, and subtracting from $1-\alpha$
gives $\Delta_m$ in \eqref{eq:scaleexact}.
The centered mean error and variance pivot have distributions
independent of $\mu$, explaining the same property of both quantities.

Fixing the fitted variance at $v_0$ gives $\mathcal H_v(v_0)$.
For $v_0=v$, we have $\sigma_{\rm res}^2=v_{\rm pred}=v\tau^2/(\lambda^2v+\tau^2)$,
and averaging over the fitted mean gives
\[
 \frac{R}{\sqrt{v_{\rm pred}}}
 \sim\N\!\left(0,1+\frac{\tau^2}{m\lambda^2v}\right).
\]
Its probability of lying between $-z_\alpha$ and $z_\alpha$
is $C_m^{\rm loc}$ in \eqref{eq:locationlimit}. The added
variance is positive for finite $m$, so this probability is
strictly below $1-\alpha$ and $\Delta_m^{\rm loc}>0$.
\end{proof}

Clipping can yield overcoverage even when the true variance is
inside the bounds. For $\alpha=0.10$, $m=100$, $\lambda=0.5$,
$v=\tau^2=1$, and bounds $[0.99,10]$, numerical evaluation of
\eqref{eq:scaleexact} gives $C_m=0.9103$ and $\Delta_m=-0.0103$.

\subsection{Proof of Proposition~\ref*{prop:scalecost} (leading coverage gaps)}
\label{app:scaleproof}
\paragraph{Fisher information for outcome parameters.}
The target sensor distribution also shows why its mean and
variance are not equally easy to estimate. Its Fisher
information per observation is
\begin{equation}
 I_{\mu\mu}=\frac{\lambda^2}{\sigma_Z^2},\qquad
 I_{vv}=\frac{\lambda^4}{2\sigma_Z^4},\qquad
 I_{\mu v}=0.
 \label{eq:information}
\end{equation}
With $v$ and $\tau^2$ fixed, information about the outcome
mean decreases as $\lambda^2$ when the loading approaches
zero, whereas information about the outcome variance decreases as
$\lambda^4$. At zero loading, the sensor contains no
information about either parameter. This information loss explains the sensitivity to weak
measurements in \Cref{subsec:scalarlearning}.

\paragraph{Coverage curvature and variance-estimation error.}
To understand the additional cost of estimating variability,
first suppose that the target mean is known. Let
$\kappa=\lambda^2v/\tau^2$ be the measurement signal-to-noise
ratio, and let $\rho>0$ denote the fitted-to-true outcome variance ratio.
The coverage as a function of $\rho$ is
\[
 h_\kappa(\rho)
 =2\Phi\left(z_\alpha\sqrt{
   \frac{\rho(1+\kappa\rho)}{1+\kappa\rho^2}}
   \right)-1.
\]
At the true variance,
\begin{equation}
 \begin{split}
 h_\kappa'(1)
 &=\frac{z_\alpha\phi(z_\alpha)}{1+\kappa},\\
 h_\kappa''(1)
 &=-\frac{z_\alpha\phi(z_\alpha)}{(1+\kappa)^2}
   \left\{\frac{1+z_\alpha^2}{2}+4\kappa\right\}<0.
 \end{split}
 \label{eq:curvature}
\end{equation}
The negative curvature means that small, mean-zero errors
in the variance estimate reduce average coverage to second
order. Near the truth, the coverage lost by underestimating
the variance exceeds the gain from overestimating it by
the same amount. This local Jensen effect does not require
global concavity.

The size of the variance-estimation error can be seen directly.
Before projection, write
$\widetilde v=(S_Z^2-\tau^2)/\lambda^2$. Its exact moments are
\begin{equation}
 \E(\widetilde v/v-1)=-\frac{1+\kappa}{m\kappa},\qquad
 \Var(\widetilde v/v)
 =\frac{2(m-1)(1+\kappa)^2}{m^2\kappa^2}.
 \label{eq:scalemoments}
\end{equation}

\begin{proof}
Write $\rho=\widehat v/v$, $e=a_m^2/v=(1+\kappa)/(m\kappa)$,
and
\[
 G_\kappa(\rho,e)
 =2\Phi\left(z_\alpha\sqrt{
 \frac{\rho(1+\kappa\rho)}{1+e+\kappa\rho^2}}\right)-1.
\]
By \Cref{prop:scaleexact}, $C_m=\E G_\kappa(\rho,e)$.
Since $G_\kappa(1,0)=1-\alpha$, the gap to be expanded is
\[
 \Delta_m=G_\kappa(1,0)-\E G_\kappa(\rho,e).
\]
At $(\rho,e)=(1,0)$, the required derivatives are
\begin{align*}
 G_\kappa(1,0)&=1-\alpha, &
 \partial_\rho G_\kappa(1,0)&=\frac{z_\alpha\phi(z_\alpha)}{1+\kappa},\\
 \partial_e G_\kappa(1,0)&=-\frac{z_\alpha\phi(z_\alpha)}{1+\kappa}, &
 \partial_{\rho\rho}G_\kappa(1,0)&=h_\kappa^{\prime\prime}(1).
\end{align*}
Let $\widetilde\rho=\widetilde v/v$ and choose $\epsilon>0$
so that $[v(1-\epsilon),v(1+\epsilon)]$ lies strictly inside
the clipping bounds. On $A_m=\{|\widetilde\rho-1|\le\epsilon\}$,
clipping is inactive and $\rho=\widetilde\rho>0$.
Chi-square concentration gives $\Prob(A_m^c)\le Ce^{-cm}$
for constants $C,c>0$. Since $0\le G_\kappa(\rho,e)\le1$,
the coverage contribution on $A_m^c$ is exponentially small.
Cauchy--Schwarz gives the bound
\[
 \E\bigl[|\widetilde\rho-1|^j\mathbf1_{A_m^c}\bigr]
 \le\bigl\{\E|\widetilde\rho-1|^{2j}\bigr\}^{1/2}
      \Prob(A_m^c)^{1/2}=o(m^{-1}),\qquad j=1,2.
\]
The clipped $\rho$ is bounded, so its first two moments agree
with those of $\widetilde\rho$ up to $o(m^{-1})$.
Taylor expansion is used only on $A_m$, where derivatives
are bounded. Its expected remainder is
$O(\E|\widetilde\rho-1|^3+e\E|\widetilde\rho-1|+e^2)
=O(m^{-3/2})$. Thus \eqref{eq:scalemoments} gives
\[
 \begin{split}
 \E G_\kappa(\rho,e)
 &=G_\kappa(1,0)
 +\partial_\rho G_\kappa(1,0)\E(\rho-1)
 +\partial_e G_\kappa(1,0)e\\
 &\quad+\tfrac12\partial_{\rho\rho}G_\kappa(1,0)
 \Var(\rho)+o(m^{-1}).
 \end{split}
\]
The three corrections after $1-\alpha$ are, respectively,
\[
 -\frac{z_\alpha\phi(z_\alpha)}{m\kappa},\qquad
 -\frac{z_\alpha\phi(z_\alpha)}{m\kappa},\qquad
 -\frac{z_\alpha\phi(z_\alpha)}m
 \left\{\frac4\kappa+\frac{1+z_\alpha^2}{2\kappa^2}\right\}.
\]
They arise from variance-estimator bias, mean estimation, and
curvature with respect to the variance estimate. Subtracting
the resulting coverage from $1-\alpha$ gives the expansion of
$\Delta_m$ in \eqref{eq:scalecost}. Using the unbiased sample variance
removes the first contribution without changing the leading
variance term.

Expanding \eqref{eq:locationlimit} gives the expansion of $\Delta_m^{\rm loc}$
in \eqref{eq:scalecost}. Both leading gaps are positive, and
$\Delta_m-\Delta_m^{\rm loc}$ has positive leading coefficient
$z_\alpha\phi(z_\alpha)\{5/\kappa+(1+z_\alpha^2)/(2\kappa^2)\}$.
This proves $0<\Delta_m^{\rm loc}<\Delta_m$ for sufficiently large $m$.
\end{proof}

\paragraph{Unbiased variance estimation and the numerical example.}
If the variance estimate instead uses divisor $m-1$ in
$S_Z^2$ before projection, the coefficient $6/\kappa$
becomes $5/\kappa$. For the numerical example in \Cref{subsec:scalarlearning}, the unprojected maximum-likelihood estimate has
$\Var(\widetilde v/v)=0.2889$,
and projection puts $\widehat v$ at the lower bound $0.05$
with probability $0.0380$. Using the unbiased sample variance
before projection gives coverage $0.8407$, so removing the
bias leaves most of the loss. These coverage values are
computed from the exact expectation, with the appropriate
sample-variance divisor and projection. The expansion holds
at fixed $\kappa$ as $m$ grows. The exact expression should
be used when weak measurements produce large errors or
frequent estimates at the bounds.

\subsection{Statement and proof of Corollary~\ref*{cor:contextcost} (context-specific fitting cost)}
\label{app:contextcostproof}

The following corollary applies the scalar coverage formulas
within each context and averages using the target context
probabilities. It is a direct consequence of
\Cref{prop:scaleexact,prop:scalecost}.

\begin{corollary}[Context-specific fitting cost]
\label{cor:contextcost}
Consider $K$ contexts with target probabilities $\pi_c>0$.
The source context probabilities and conditional Gaussian
outcome distributions are known. In target context $c$, suppose
\[
 Y\sim\N(\mu_c,v_c),\qquad
 Z=\lambda_cY+\varepsilon_c,\qquad
 \varepsilon_c\sim\N(0,\tau_c^2),\qquad
 \kappa_c=\lambda_c^2v_c/\tau_c^2>0,
\]
with known shared measurement parameters and noise independent
of $Y$. Let $M_c$ count the unlabeled target observations
in context $c$. Fit a separate mean in each context and use
the resulting joint ratio and adjusted standardized Gaussian
score. Conditional on $M_c=m_c\ge2$ for every $c$, if the
outcome variances are known, the large-calibration coverage
averaged over the target fitting observations is
\begin{equation}
 \sum_{c=1}^K\pi_c\left[
 2\Phi\left\{\frac{z_\alpha}
 {\sqrt{1+1/(m_c\kappa_c)}}\right\}-1
 \right].
 \label{eq:contextlocationcost}
\end{equation}
If each variance is also fitted by the projected
maximum-likelihood estimator, the exact expression is
$\sum_c\pi_c C_{m_c}(\lambda_c,v_c)$, with each context's
noise variance and fixed bounds used in \eqref{eq:scaleexact}.

Let $\Delta_m^{\rm separate}$ and
$\Delta_m^{\rm separate,loc+scale}$ denote $1-\alpha$ minus
the averaged limiting coverage for separate mean-only and
mean-and-variance fits, respectively. For fixed $K$, fixed
positive $\pi_c$ and $\kappa_c$, and true variances strictly
inside their bounds, averaging also over the random context
counts gives
\begin{align}
 \Delta_m^{\rm separate}
 &=\frac{z_\alpha\phi(z_\alpha)}{m}
       \sum_{c=1}^K\frac1{\kappa_c}+o(m^{-1}),\notag\\
 \Delta_m^{\rm separate,loc+scale}
 &=\frac{z_\alpha\phi(z_\alpha)}{m}
       \sum_{c=1}^K\left\{\frac6{\kappa_c}
                    +\frac{1+z_\alpha^2}{2\kappa_c^2}\right\}
       +o(m^{-1}).
 \label{eq:contextasymptoticcost}
\end{align}
Here $n\to\infty$ before $m\to\infty$. Any strictly positive
fitted context probabilities fixed by $\calA$ may be used
in the weights. Use a prespecified fallback if any $M_c<2$.
These events have exponentially small probability in this
asymptotic regime.
\end{corollary}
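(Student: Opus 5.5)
The plan is to reduce the multi-context problem to $K$ independent copies of the scalar analysis in \Cref{prop:scaleexact,prop:scalecost}, and then average over the target context distribution. First, condition on $\calA$ and on the counts $M_c=m_c\ge2$. Because the fitted weight factorizes as $\widehat w((c,z),y)=\widehat r_C(c)\,q_{\widehat\mu_c,\widehat v_c}(y\mid c)/p(y\mid c)$, the limiting FWQ threshold can be computed from the fitted target distribution $R_{\widehat w}$. Under that distribution, the adjusted standardized residual is exactly $\N(0,1)$ within every context, whatever positive context probabilities $\widehat r_Cp_C$ are used. So the weighted strong law gives the threshold limit $z_\alpha$, and the context ratio only affects the mixing weights of identical residual distributions. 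This is why any strictly positive fitted context probabilities may be used. The calibration step of the proof of \Cref{prop:scaleexact} then transfers verbatim, for both FWQ and exact ranks. The limiting coverage under $Q$ becomes $\sum_c\pi_c\,\Prob_Q\{Y\in I_\infty^{(c)}(Z)\mid C=c\}$, where $I_\infty^{(c)}$ is the scalar interval built from the fit in context $c$.

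Second, within context $c$, the fit uses only the $m_c$ target readings from that context, which are i.i.d.\ $\N(\lambda_c\mu_c,\lambda_c^2v_c+\tau_c^2)$ given the counts. Averaging over the target sample is therefore exactly the scalar averaging problem with parameters $(\lambda_c,v_c,\tau_c^2,m_c)$. With the variance known, \eqref{eq:locationlimit} gives each summand. Here I rewrite $\tau_c^2/(m_c\lambda_c^2v_c)=1/(m_c\kappa_c)$, which yields \eqref{eq:contextlocationcost}. With the variance fitted, \eqref{eq:scaleexact} gives $C_{m_c}(\lambda_c,v_c)$ for each context. Mean-centering independence across contexts is not even needed, since coverage is linear in the per-context terms.

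Third, for the asymptotics I average over the multinomial counts $M\sim\mathrm{Mult}(m,\pi)$, after first modifying the statement on the fallback event. The main obstacle is controlling this step uniformly. For the location-only case, the function $g(x)=2\Phi\{z_\alpha(1+x)^{-1/2}\}-1$ is smooth, with $g(x)=1-\alpha-z_\alpha\phi(z_\alpha)x/2+O(x^2)$, where $x=1/(M_c\kappa_c)$. Chernoff bounds give $\Prob(|M_c/m-\pi_c|>\pi_c/2)\le e^{-cm}$, and the same bound covers the events $\{M_c<2\}$. Since coverage lies in $[0,1]$, these events contribute $o(m^{-1})$. On the complement, $\E[1/M_c]=1/(m\pi_c)+O(m^{-3/2})$ by a delta expansion of $M_c/m$ around $\pi_c$. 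Hence $\pi_c\,\E[z_\alpha\phi(z_\alpha)/(2M_c\kappa_c)]$ gives a per-context gap contribution of $z_\alpha\phi(z_\alpha)/(2m\kappa_c)$. I need to recheck the factor against \eqref{eq:scalecost}. Using $\Delta^{\rm loc}_{m_c}=z_\alpha\phi(z_\alpha)/(m_c\kappa_c)+o(m_c^{-1})$ directly, the contribution is $\pi_c\cdot z_\alpha\phi(z_\alpha)/(m\pi_c\kappa_c)$, and the $\pi_c$ cancels. Summing over $c$ then gives the first line of \eqref{eq:contextasymptoticcost}.

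For the mean-and-variance case, I apply \Cref{prop:scalecost} conditionally on $M_c=m_c$. The one extra requirement is that its $o(m_c^{-1})$ remainder hold uniformly over $m_c\in[\pi_cm/2,2\pi_cm]$. I would establish this by noting that the Taylor and chi-square concentration constants in its proof depend only on $(\kappa_c,\alpha,v_c,\text{bounds})$ and not on $m_c$. Then $\pi_c\E[1/M_c]=1/m+o(m^{-1})$ yields the second line.
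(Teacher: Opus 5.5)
Your proposal is correct and follows essentially the same route as the paper. You condition on the counts and note that under the fitted target distribution the adjusted score is absolute standard normal in every context, so the limiting threshold is $z_\alpha$ for any positive fitted context probabilities; you then apply the scalar formulas of \Cref{prop:scaleexact,prop:scalecost} within each context with weights $\pi_c$, and average over multinomial counts using $\E[\ind\{M_c\ge2\}/M_c]=1/(m\pi_c)+o(m^{-1})$ and exponentially small bad events.

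Two small remarks on the details. The factor discrepancy you flagged is only a dropped $2$: for $g(x)=2\Phi\{z_\alpha(1+x)^{-1/2}\}-1$ one gets $g'(0)=-z_\alpha\phi(z_\alpha)$, which matches \eqref{eq:scalecost}. The uniformity you want in the variance-fitted case is automatic, since the remainder $r(k)$ depends only on $k$ and hence $\sup_{k\ge\pi_c m/2}|r(k)|=o(m^{-1})$.
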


To interpret the result, context $c$ receives about
$m\pi_c$ target observations, giving a leading location
loss proportional to $1/(m\pi_c\kappa_c)$ within that context.
Marginal coverage weights this loss by $\pi_c$, leaving the
term $1/(m\kappa_c)$ in \eqref{eq:contextasymptoticcost}.
Rare contexts can therefore have poor coverage individually,
even though their probabilities cancel from the leading
marginal expression. The proof below
also explains why fitted context probabilities do not affect
this particular population threshold.

\begin{proof}
Conditional on the context counts, each context has the normal
mean and variance pivots used in \Cref{prop:scaleexact}. Under
the fitted target distribution, the adjusted score is absolute
standard normal in every context. Its mixture over contexts
therefore has the same distribution, regardless of the positive
fitted context probabilities. The population weighted quantile
is consequently $z_\alpha$.

Apply the scalar location-only formula within each context
and average using the true probabilities $\pi_c$ to obtain
\eqref{eq:contextlocationcost}. When variances are also fitted,
the same argument gives $\sum_c\pi_cC_{m_c}(\lambda_c,v_c)$.

For multinomial counts with fixed $\pi_c>0$,
\[
 \E\left[\frac{\ind\{M_c\ge2\}}{M_c}\right]
 =\frac1{m\pi_c}+O(m^{-2}),
\]
where the fraction is defined as zero when $M_c<2$, and
$\Prob(M_c<2)$ decays exponentially in $m$.
Apply the scalar expansions for counts near $m\pi_c$.
Multiplication by $\pi_c$ cancels the corresponding denominator,
giving \eqref{eq:contextasymptoticcost}. Counts away from this
range have exponentially small probability, and coverage is
bounded between zero and one. The prespecified fallback therefore
does not affect the displayed first-order terms.
\end{proof}

\subsection{Statement and proof of Proposition~\ref*{prop:sharedcorrection} (shared versus separate corrections)}
\label{app:sharedproof}

We state the Gaussian comparison for potentially different
source means, variances, and measurement strengths across
contexts. The result uses classical normal sampling identities
\citep{CochranW1934mpcps}.

\bigskip
\begin{proposition}[Shared versus separate location corrections]
\label{prop:sharedcorrection}
Let the source outcome mean in context $c$ be $\mu_c^P$ and
the target mean be $\mu_c^P+\delta_c$, with the same known
variance $v_c$ in both populations. Retain the setting of
\Cref{cor:contextcost}, allowing a known sensor intercept in
$Z=a_c+\lambda_cY+\varepsilon_c$.
Condition on target counts $m_c\ge2$. The separate change
estimates and their precisions are
\[
 \widehat\delta_c
 =\frac{\overline Z_c^{\rm t}-a_c}{\lambda_c}-\mu_c^P,
 \qquad
 J_c=\frac{m_c\kappa_c}{v_c(1+\kappa_c)}.
\]
The maximum-likelihood estimate of a shared change is
\[
 \widehat\delta=\frac{\sum_cJ_c\widehat\delta_c}{\sum_cJ_c},
 \qquad
 \overline\delta=\frac{\sum_cJ_c\delta_c}{\sum_cJ_c},
 \qquad
 v_\delta=\left(\sum_cJ_c\right)^{-1},
\]
so that $\widehat\delta\sim\N(\overline\delta,v_\delta)$.
Use the ratio and adjusted standardized score from this shared fit.
As $n\to\infty$, its target coverage, averaged over the fitting sample, is
\begin{equation}
 \begin{split}
 C_{\rm shared}
 &=\sum_c\pi_c\left[
 \Phi\left(\frac{z_\alpha-b_c}{\sqrt{1+\omega_c}}\right)
 -\Phi\left(\frac{-z_\alpha-b_c}{\sqrt{1+\omega_c}}\right)
 \right],\\
 b_c&=\frac{\delta_c-\overline\delta}{\sqrt{v_c(1+\kappa_c)}},
 \qquad
 \omega_c=\frac{v_\delta}{v_c(1+\kappa_c)}.
 \end{split}
 \label{eq:sharedcoverage}
\end{equation}

For fixed $K$, suppose $v_c=v$, $\kappa_c=\kappa$,
$m_c/m\to\pi_c>0$, and
$\delta_c=\delta_0+\xi_c/\sqrt m$ with
$\sum_c\pi_c\xi_c=0$.
The gaps $\Delta_m^{\rm separate}=(1-\alpha)-C_{\rm separate}$
and $\Delta_m^{\rm shared}=(1-\alpha)-C_{\rm shared}$ satisfy
\eqref{eq:poolingsummary}. Subtracting them gives
\begin{equation}
 \Delta_m^{\rm shared}-\Delta_m^{\rm separate}
 =\frac{z_\alpha\phi(z_\alpha)}m
 \left\{\frac{\sum_c\pi_c\xi_c^2}{v(1+\kappa)}
             -\frac{K-1}{\kappa}\right\}+o(m^{-1}).
 \label{eq:poolingtradeoff}
\end{equation}
\end{proposition}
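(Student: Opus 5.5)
We carry out the proof in three steps: obtain the exact sampling law of the change estimates, reduce calibration to the fitted-model quantile $z_\alpha$ as in \Cref{prop:scaleexact,cor:contextcost}, and then do a second-order expansion in the local regime.

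\emph{Step 1: estimates.} Conditional on the counts, the readings in context $c$ are i.i.d.\ $\N\bigl(a_c+\lambda_c(\mu_c^P+\delta_c),\,\lambda_c^2v_c+\tau_c^2\bigr)$, independently across contexts. Hence
\[
 \widehat\delta_c\sim\N\bigl(\delta_c,(\lambda_c^2v_c+\tau_c^2)/(m_c\lambda_c^2)\bigr).
\]
Since $(\lambda_c^2v_c+\tau_c^2)/\lambda_c^2=v_c(1+\kappa_c)/\kappa_c$, this variance equals $1/J_c$. The estimates are independent across contexts. With known variances, the only $\delta$-dependent part of the shared Gaussian log likelihood is $-\tfrac12\sum_cJ_c(\widehat\delta_c-\delta)^2$ plus a constant. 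So the MLE is the precision-weighted mean, and $\widehat\delta\sim\N(\overline\delta,v_\delta)$ follows directly.

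\emph{Step 2: calibration limit and conditional coverage.} Under the fitted shared model, the adjusted standardized residual is standard normal in every context. Its mixture over contexts, under any positive fitted context probabilities, is therefore $|\N(0,1)|$. As in the proofs of \Cref{prop:scaleexact} and \Cref{cor:contextcost}, both FWQ and exact ranks then have limiting threshold $z_\alpha$ as $n\to\infty$. The limiting coverage is the true-$Q$ probability of the interval $\widehat\mu_{\rm pred}(Z)\pm z_\alpha\sqrt{v_{\rm pred}}$.

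In context $c$ the predictive variance is known, $v_{\rm pred}=v_c/(1+\kappa_c)$, and $1-\lambda_c\gamma_c=1/(1+\kappa_c)$. By the residual decomposition in the proof of \Cref{prop:scaleexact}, the standardized residual given $\widehat\delta$ is normal with unit variance and mean $(\delta_c-\widehat\delta)/\sqrt{v_c(1+\kappa_c)}$. Averaging over $\widehat\delta\sim\N(\overline\delta,v_\delta)$ makes it $\N(b_c,1+\omega_c)$. Taking the probability of $[-z_\alpha,z_\alpha]$ and averaging with the true $\pi_c$ gives \eqref{eq:sharedcoverage}.

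\emph{Step 3: asymptotics.} Let $g(b,\omega)=\Phi\{(z_\alpha-b)/\sqrt{1+\omega}\}-\Phi\{(-z_\alpha-b)/\sqrt{1+\omega}\}$. This function is even in $b$, with $g(0,0)=1-\alpha$. Its relevant derivatives at the origin are
\[
 \partial_\omega g=-z_\alpha\phi(z_\alpha),\qquad
 \partial_{bb}g=-2z_\alpha\phi(z_\alpha),
\]
so that
\[
 (1-\alpha)-g(b_c,\omega_c)=z_\alpha\phi(z_\alpha)(\omega_c+b_c^2)+O(b_c^4+\omega_c^2+|b_c|^2\omega_c).
\]
With $v_c=v$ and $\kappa_c=\kappa$, we have $v_\delta=v(1+\kappa)/(\kappa\sum_cm_c)$, hence $\omega_c=1/(m\kappa)$ exactly. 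Because $J_c\propto m_c$ and $m_c/m\to\pi_c$, the weighted mean satisfies $\overline\delta=\delta_0+m^{-1/2}\sum_c(m_c/m)\xi_c$. This gives $\delta_c-\overline\delta=m^{-1/2}\{\xi_c+o(1)\}$, so $b_c^2=\xi_c^2/\{mv(1+\kappa)\}+o(m^{-1})$.

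Summing with weights $\pi_c$ and using $\Var_Q(\delta_C)=m^{-1}\sum_c\pi_c\xi_c^2$ (the $\pi$-mean of $\xi$ is zero) gives the shared line of \eqref{eq:poolingsummary}. The separate line is \eqref{eq:contextasymptoticcost} of \Cref{cor:contextcost} with $\kappa_c=\kappa$, or directly $\sum_c\pi_c\omega^{\rm sep}_c$ with $\omega^{\rm sep}_c=1/(m_c\kappa)$ and $b_c=0$. Subtracting the two expansions gives \eqref{eq:poolingtradeoff}.

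The main obstacle is bookkeeping in Step 3: showing that $m_c/m-\pi_c=o(1)$ enters only at order $o(m^{-1})$, and that the Taylor remainders are uniformly $o(m^{-1})$. The remainders are controlled because $b_c$ and $\omega_c$ are deterministic and of orders $m^{-1/2}$ and $m^{-1}$, and $g$ is smooth with bounded derivatives. If the counts are random, I would average as in the proof of \Cref{cor:contextcost}: restrict to the event $|M_c/m-\pi_c|\le\epsilon$, which has exponentially small complement, and use boundedness of coverage there.
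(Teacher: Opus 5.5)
Your proposal is correct and follows the paper's proof essentially step for step: the precision-weighted shared estimate, the fitted-model threshold $z_\alpha$ (the score is absolute standard normal in every context), the $\N(b_c,1+\omega_c)$ standardized residual after averaging over $\widehat\delta$, and the expansion $1-\alpha-z_\alpha\phi(z_\alpha)(b^2+\omega)$ with $\omega_c=1/(m\kappa)$ and $b_c^2=\xi_c^2/\{mv(1+\kappa)\}+o(m^{-1})$. One small point concerns your optional random-count remark, which the statement (phrased for deterministic $m_c/m\to\pi_c$) does not require: restricting to $|M_c/m-\pi_c|\le\epsilon$ with fixed $\epsilon$ only bounds $\sum_c(M_c/m)\xi_c$ by $O(\epsilon)$, so you should instead average the fluctuation, as the paper does via $\Var(\overline\delta-\sum_c\pi_c\delta_c)=O(m^{-2})$ and the reciprocal-count expansion, or let $\epsilon$ shrink slowly with $m$.
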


In \eqref{eq:sharedcoverage}, $b_c$ measures the bias from
imposing a common change, while $\omega_c$ measures uncertainty
in the estimated shared change, both on the predictive
residual scale. The local comparison in
\eqref{eq:poolingtradeoff} considers differences between
contexts of the same order as estimation error. Separate
fitting helps when the first term, measuring these differences,
exceeds the additional estimation cost of fitting $K$
parameters instead of one. Under these local alternatives,
$\Var_Q(\delta_C)=\sum_c\pi_c\xi_c^2/m$, and
$\Delta_m^{\rm shared}-\Delta_m^{\rm separate}
=C_{\rm separate}-C_{\rm shared}$.
Thus \eqref{eq:poolingtradeoff} compares coverage as well as
the gaps displayed separately in \eqref{eq:poolingsummary}.

\begin{proof}
Write $\sigma_{Z,c}^2=\lambda_c^2v_c+\tau_c^2$.
Conditional on the context counts, the separate mean-change
estimates are independent and satisfy
\[
 \widehat\delta_c
 =\frac{\overline Z_c^{\rm t}-a_c}{\lambda_c}-\mu_c^P,
 \qquad
 \widehat\delta_c\sim\N(\delta_c,J_c^{-1}).
\]
Their precision-weighted average is the shared maximum-likelihood
estimate, with distribution
$\N(\overline\delta,v_\delta)$.

The fitted predictive variance in context $c$ is
$v_{{\rm pred},c}=v_c/(1+\kappa_c)$. Under the true target distribution,
conditional on the shared estimate, the standardized predictive
residual has variance one and mean
$(\delta_c-\widehat\delta)/\sqrt{v_c(1+\kappa_c)}$.
Averaging over the fitting sample gives a normal residual with
mean $b_c$ and variance $1+\omega_c$. Under the fitted target
distribution, the score is absolute standard normal in every
context, so the population weighted threshold is $z_\alpha$.
Evaluating the true residual probability between
$-z_\alpha$ and $z_\alpha$, and averaging over contexts,
proves \eqref{eq:sharedcoverage}.

Near $b=\omega=0$, the normal interval probability expands as
$1-\alpha-z_\alpha\phi(z_\alpha)(b^2+\omega)+o(b^2+\omega)$.
In the homogeneous model,
\[
 \omega_c=\frac1{m\kappa},\qquad
 b_c=\frac{\xi_c}{\sqrt{mv(1+\kappa)}}+o(m^{-1/2}).
\]
Averaging $b_c^2+\omega_c$ with weights $\pi_c$ gives the shared
gap in \eqref{eq:poolingsummary}, since
$\Var_Q(\delta_C)=\sum_c\pi_c\xi_c^2/m$.
The separate-fit formula \eqref{eq:contextlocationcost}
gives $\Delta_m^{\rm separate}
=z_\alpha\phi(z_\alpha)K/(m\kappa)+o(m^{-1})$.
Subtracting this gap from the shared gap proves
\eqref{eq:poolingtradeoff}.

For multinomial context counts, the same first-order terms
follow after averaging. The reciprocal-count expansion in
\Cref{app:contextcostproof} gives the separate gap. For the
shared fit, $\omega_c=1/(m\kappa)$ whenever all contexts are fitted,
while $\overline\delta-\sum_c\pi_c\delta_c$ has variance
$\Var_Q(\delta_C)/m=O(m^{-2})$ under the local alternatives.
This affects the shared gap only beyond order $m^{-1}$.
Counts too small to fit have exponentially small probability.
The prespecified fallback in \Cref{cor:contextcost} therefore
does not change either leading expansion.

\end{proof}

\section{Experimental implementation}
\label{app:protocols}

The Gaussian comparison, pooling study, and Air Quality study use
seeds $20261002$, $20261003$, and $20261004$, respectively.

\subsection{Gaussian comparisons and non-Gaussian check}
\label{app:gaussianprotocol}

\paragraph{Source fitting and corrections.}
Outcome means and unbiased variances, sensor intercepts, and
loadings are fitted by context. With $N_c$ training observations,
sensor residual variances use $N_c-2$ degrees of freedom. Context frequencies use half-count
smoothing. Separate corrections invert fitted sensor regressions at
target sensor means and subtract source outcome means. Shared corrections
average these changes with precisions
$m_c\widehat\lambda_c^2/
(\widehat\lambda_c^2\widehat v_c+\widehat\tau_c^2)$.

\paragraph{Comparators.}
Covariate weights use logistic regression on $(c,z,z^2,cz,cz^2)$,
source-based standardization, and $L_2$ inverse penalty strength
one. Odds are adjusted for sampling proportions, and log ratios are
clipped to $[-8,8]$. Pooled label adjustment fits a pooled sensor
model and a two-component target outcome mixture by sensor
likelihood, predicting from $Z$ alone. Its ratio denominator is
the mixture of fitted source outcome densities weighted by fitted
source context frequencies. L-BFGS-B retains the best of four
initializations, using default stopping tolerances, means in
$[-5,5]$, variances in $[0.05,10]$, and mixture log odds in
$[-3,3]$. Coverage evaluation uses exact normal probabilities for
linear predictors and $120$ Gauss--Hermite nodes per target
component for the pooled predictor.

\paragraph{Non-Gaussian check.}
The $150$ paired fits use seed $202610020729$. Gaussian maximum
likelihood in context zero and a two-component Gaussian mixture
with common component variance and five EM starts in context one
are combined with a shared linear sensor regression. Separate
tilts in $[-1.5,1.5]$ maximize target-input likelihood. Context
frequencies use half-count smoothing. Covariate weights are
$\widehat q_C(c)\widehat q(z\mid c)/
\{\widehat p_C(c)\widehat p(z\mid c)\}$. Both input densities
integrate the fitted sensor model against the fitted outcome density,
with the target using the same estimated tilts as mechanism
adjustment. Covariate scores are standardized source residuals.
Evaluation integrates conditional Gaussian-mixture coverage and
length using $1024$ Gauss--Hermite nodes per true target sensor
component. Doubling the nodes changed each metric by less than
$10^{-6}$ in replication zero. SEs are replication sample standard
deviations divided by $\sqrt{150}$.

\subsection{Sharing corrections and finite calibration}
\label{app:fittingprotocol}\label{app:poolingprotocol}

Centering and scaling $c$ under $\pi_c\propto0.8^c$ gives $u_c$ in
\Cref{fig:poolingcomparison}. Source contexts are uniform, and source
parameters are known or fitted from training data. Corrections
follow the rules above, with empty target contexts receiving the shared
change. Both methods retain context and use standardized residuals
and FWQ. Formula curves use $m_c=m\pi_c$, while simulations use sampled counts.

The scalar study uses target mean one, variance bounds $[0.05,10]$,
and $1000$ independent normal and chi-square fitting pivots, with
calibration samples nested over $n$. For each fit, subtract exact
population-quantile coverage from empirical-quantile coverage.
Adding the mean difference to \eqref{eq:scaleexact} gives the
control-variate estimates in \Cref{sec:finitecalibration}.
Fits with infinite ratio second moments are retained.

\subsection{Air Quality data and evaluation}
\label{app:airprotocol}

\paragraph{Data and source models.}
The outcome is log \texttt{CO(GT)}. The five log-transformed sensors are
\texttt{PT08.S1(CO)}, \texttt{PT08.S2(NMHC)}, \texttt{PT08.S3(NOx)},
\texttt{PT08.S4(NO2)}, \texttt{PT08.S5(O3)}. Exclude records
coded $-200$ in these variables, \texttt{T}, \texttt{RH}, or
\texttt{AH}, and require positive CO. Retained sensors are positive.
Missingness elsewhere does not determine inclusion. Contexts use
$T\ge20^\circ$C for warm and $\mathrm{RH}\ge50\%$ for humid.
Absolute humidity \texttt{AH} is not a predictor.
Source outcomes have context-specific means and scales and a shared
standardized residual mixture selected by BIC from one to four
Gaussian components. The sensor model is
\[
 Z=a_c+B_c(Y,Y^2)^\top+\varepsilon_c,
 \qquad\varepsilon_c\sim\N(0,\Sigma_c),
\]
with covariance shrunk $2\%$ toward its diagonal. Contexts with
fewer than $20$ training records share a fit within their
temperature group, or use all training records if necessary.
Context frequencies use half-count smoothing.

\paragraph{Adaptation and calibration.}
Each context has a tilt coefficient in $[-3,3]$, fitted by
L-BFGS-B from zero with default tolerances. The feature is
$g(y)=\operatorname{clip}(y,\log0.3,\log8)-\log2$.
Integrals use $271$ equally weighted log-CO grid points on
$[-4.5,4.5]$. Tilt normalizers include grid spacing, posterior
weights sum to one, and predictive variances are floored at $10^{-4}$.
Covariate-S uses context indicators and within-context linear and
squared sensor features, with the source scaling, logistic penalty,
odds correction, and clipping specified above. Both methods use
standardized residuals and FWQ. Standard CP uses the source score
and an unweighted split-conformal quantile.

\paragraph{Controlled selection.}
Exclude $35$ design days and randomly allocate the remaining days
in proportions $40/20/20/20$ to source training/calibration and target
fitting/evaluation. Source/target records are retained with
probabilities $1-\operatorname{expit}(\xi)$ and
$\operatorname{expit}(\xi)$, where
$\operatorname{expit}(\xi)=1/(1+e^{-\xi})$. Their density ratio is
proportional to $e^\xi$. The three designs use
\[
 \xi=0,\qquad \xi=0.8s_H+0.4s_T,\qquad
 \xi=0.25s_T+1.2s_H\ell(y),
\]
where $\ell$ is defined in \Cref{sec:air} and $s_T,s_H$ are $1$
for warm/humid conditions and $-1$ otherwise.

\paragraph{Calendar-based splits and evaluation.}
Transfers use weekdays to weekends, March--May to
September--November 2004, and March 10--April 30, 2004
($960$ hours) to March 1--April 4, 2005 ($808$ hours).
These source and target groups use all eligible days, including design days.
Source days are split $60/40$ for training/calibration, and target
days are split $50/50$ for fitting/evaluation. The labeled reference
keeps the source predictor and calibrates on target fitting outcomes.
Metrics average evaluation hours within splits, then equally over
$50$ splits. Paired SEs use the sample standard deviation of
within-split differences divided by $\sqrt{50}$. Day splits separate
sample roles, without establishing temporal independence or
independent validation of design choices.

\end{document}